\RequirePackage{fix-cm}
\documentclass[11pt,a4paper]{article}
\usepackage[margin=1in]{geometry}
\usepackage[english]{babel}
\usepackage{amsmath,amssymb,amsthm}
\usepackage{graphicx}
\usepackage{microtype}
\usepackage{xcolor}
\usepackage{booktabs}
\usepackage{subcaption}
\usepackage{tikz}
\usetikzlibrary{decorations.pathreplacing}
\usepackage[numbers]{natbib}
\usepackage[colorlinks=true,allcolors=blue]{hyperref}
\newtheorem{theorem}{Theorem}[section]
\newtheorem*{theorem*}{Theorem}
\newtheorem{definition}[theorem]{Definition}
\newtheorem{claim}[theorem]{Claim}

\newtheorem{lemma}[theorem]{Lemma}

\newtheorem{remark}[theorem]{Remark}

\newcommand{\bP}{\mathbf{P}}
\newcommand{\bX}{\mathbf{X}}

\newcommand{\MMS}{\mu}

\title{MMS Allocation for Chores with Item Disposals}

\author{
  Shinuo Ma\thanks{
    Department of Computer Science, the University of Hong Kong.
    \nolinkurl{shinuo_ma@connect.hku.hk}.
  }
  \and
  Xiaowei Wu\thanks{
    IOTSC, University of Macau.
    \nolinkurl{xiaoweiwu@um.edu.mo}.
   } 
}

\date{\today}
\hypersetup{
  pdftitle={MMS Allocation for Chores with Item Disposals},
  pdfauthor={Shinuo Ma and Xiaowei Wu}
}

\begin{document}
\maketitle
\begin{abstract}
Web platforms organize collective work, including content review and
community maintenance, that imposes heterogeneous effort costs on
participants. Fairly distributing these burdens while leaving as few
tasks unassigned as possible is a natural objective for the design
of such platforms.
We study this through the maximin share (MMS) allocation problem with chore
disposals under nonnegative additive costs. Our goal is to minimize
the number of unassigned chores while guaranteeing every agent her
MMS benchmark from the original instance.
We prove that, for every instance with \(n \ge 4\) agents, at most
\(n-3\) chore disposals suffice.
We improve the bound of \(2m/11+n\) disposals established by
Barman et al. (2025), where \(m\) is the number of
chores, and eliminate the dependence on \(m\).
For four agents, we obtain a tight bound: one disposal always suffices,
whereas known instances admit no complete MMS
allocation (Ezra and Garbuz, 2026). We also prove the same result for goods: \(n-3\) duplications suffice to ensure MMS.
\end{abstract}

\noindent\textbf{Keywords:} fair division, maximin share, chores disposal

\section{Introduction}
\label{sec:introduction}

Web platforms in the gig economy allocate indivisible tasks, such as
data annotation and content review, among workers with heterogeneous
effort costs. Fairly distributing these burdens is an important objective
for platforms concerned with worker satisfaction and continued
participation. However, indivisibility can prevent a complete allocation
from meeting every worker's fairness guarantee. When some tasks can be
deferred or omitted, leaving a limited number unassigned provides a way
to preserve fairness. This motivates our study of chore allocation with
disposals: minimizing the number of unassigned tasks while guaranteeing
each worker a fair-share benchmark defined on the original task set.
Such a benchmark must account for both indivisibility and differences
in workers' effort costs.

Fair division provides a formal framework for this task allocation
problem, treating workers as agents and their tasks as indivisible chores. A natural fairness goal is
proportionality: each of the \(n\) agents should incur at most a
\(1/n\) fraction of her total cost for all tasks. However, indivisibility may prevent an agent from partitioning the tasks into
\(n\) equally costly bundles. We therefore consider a benchmark based on
the best partition she can form.
The \emph{maximin share} (MMS) benchmark was introduced by Budish
for goods~\cite{budish2011combinatorial}, extending the divider's
worst-case guarantee in divide-and-choose to \(n\) bundles.
For goods, an agent's MMS is the maximum value she can ensure
for the least valuable bundle by choosing a partition into
\(n\) bundles. For chores, the analogous benchmark reverses
the optimization: her MMS is the minimum cost she can ensure
for the most costly bundle. In both cases, the benchmark
captures her best guarantee when she chooses the partition
but may receive any one of its bundles. An MMS allocation
gives every agent value at least her MMS for goods, or cost
at most her MMS for chores. Although these benchmarks
accommodate indivisibility, MMS allocations need not exist,
even under additive preferences, for either
{goods~\cite{kurokawa2018fair} or chores~\cite{aziz2017chores}.}

We briefly review the extensive literature on approximate
MMS allocations motivated by this nonexistence. An \(\alpha\)-MMS allocation guarantees every agent
value at least \(\alpha\) times her MMS for goods, or cost at most
\(\alpha\) times her MMS for chores.
For additive goods, the guaranteed fraction has increased from
$2/3$~\cite{kurokawa2018fair} to
$3/4$~\cite{ghodsi2018fair},
$10/13$~\cite{heidari2026improved},
and $7/9$~\cite{huang2025fptas}.
For additive chores, the approximation
factor has improved from \(2\)~\cite{aziz2017chores} to
\(4/3\)~\cite{barman2020maximin}, {\(11/9\)~\cite{huang2021chores},
and \(13/11\)~\cite{huang2023scheduling}.}
This relaxes the fairness guarantee while requiring all items to be allocated.

An alternative is to adjust the supply of items while keeping every
agent's original MMS benchmark. Barman
et al.~\cite{barman2025adjusted} formalize this approach through
duplication of goods and disposal of chores. For monotone chore costs,
they show that discarding at most \(m/e\) of the \(m\) chores suffices,
and that this bound is asymptotically tight. For additive costs, they
obtain a bound of \(2m/11+n\) disposals. On the goods side, Akrami
et al.~\cite{akrami2026augmentation} show that \(n-2\) distinct extra
copies suffice for additive valuations. These results preserve exact
fairness by modifying which items must be allocated, rather than
changing the shares against which {fairness is measured.}

We study this approach for indivisible chores with nonnegative
additive costs. A disposal leaves a chore unassigned; all MMS
thresholds are computed before any chores are removed. Our objective
is to minimize the number of disposals, with particular attention to
bounds depending on the number of {agents. We ask:}
\begin{quote}
\emph{How many chores must be left unassigned to guarantee every agent
her {original maximin share?}}
\end{quote}

\subsection{Our Results}

We prove that exact MMS fairness for additive chores can be
guaranteed with at most $n-3$ disposals. We improve the $2m/11+n$ disposal bound of Barman et al.~\cite{barman2025adjusted} and eliminate
the dependence on the total number {$m$ of chores.}

\begin{theorem*}[Main result]
For every chore instance with $n\ge4$ agents and nonnegative
additive costs, there exists an allocation that leaves at most
$n-3$ chores unassigned and gives every agent her {original
MMS guarantee.}
\end{theorem*}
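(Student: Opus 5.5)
We plan to prove the statement by induction on $n$. The base case is $n=4$ with one disposal. The inductive step either removes one agent together with a bundle she accepts, without using up a disposal, or removes two agents while spending at most one disposal.

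\textbf{Normalization.} We first apply the standard reductions for additive chores. By the ordering argument of Barman--Krishnamurthy (and Huang--Lu for chores), it suffices to handle \emph{identical-ordering} instances. In these, $c_i(e_1)\ge c_i(e_2)\ge\cdots\ge c_i(e_m)$ for every agent $i$. An allocation with $k$ disposals for the ordered instance converts, via the picking-sequence argument, into one for the original instance with the same number of disposals. The conversion works because disposing of a chore only lowers costs, and the conversion preserves which positions are discarded. We also scale so that $\MMS_i=1$ for every $i$. Hence every agent has an $n$-partition with all bundles of cost at most $1$, and $c_i(e_j)\le 1$ for all $j$.

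\textbf{Reduction step.} We look for a bundle $B$ of chores and an agent $i$ with $c_i(B)\le 1$ such that $\MMS_{j}^{n-1}(M\setminus B)\le 1$ for all remaining $j$. If such a pair exists, we assign $B$ to $i$ and recurse on $n-1$ agents. The natural candidates come from the ordering. One candidate is the single largest chore paired with the smallest ones, filled greedily up to cost $1$ for the agent who first reaches the threshold; this is the first-fit / ``$\{e_1\}\cup$ small items'' rule. Another candidate is the pair $\{e_n,e_{n+1}\}$, which any MMS partition forces to share a bundle by pigeonhole. The standard valid-reduction lemmas show that if some chore costs more than $1/2$ to everyone, or if such a pair exists, then removing it keeps everyone's MMS at most $1$. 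When no reduction applies, every agent's MMS partition has at least three chores per bundle in a strong sense. In that regime we dispose of the single largest remaining chore $e_1$ and show that reductions become available for two consecutive agents. This yields the accounting ``one disposal per two agents beyond the base,'' which is at most $n-3$ overall once the base $n=4$ costs one. A cleaner alternative for this stage is a bag-filling/matching argument. There we show that with $n-3$ disposals of the top chores $e_1,\dots,e_{n-3}$, the remaining instance can be covered by $n$ bags. Each bag is built from one ``big'' chore in $e_{n-2},\dots,e_{2n}$ plus small chores, using the fact that $\sum_j c_i(e_j)\le n$ together with a pigeonhole count of large items.

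\textbf{Base case $n=4$.} Here we must show that one disposal suffices. After normalization we would dispose of $e_1$. We then argue that the remaining ordered instance admits an exact MMS allocation for $4$ agents with $\MMS\le1$, by case analysis on how many chores cost more than $1/3$ (respectively $1/2$) to each agent. Agents with many large items are served by the pairing reductions above. Once all chores are small, the instance is handled by a greedy bag-filling argument, which is tight because the total cost is at most $4$ and each bag overshoots by less than $1/3$.

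\textbf{Main obstacle.} The hardest step will be the general inductive step in which no free reduction exists. There we need to certify that a single disposal buys the removal of two agents. This requires controlling every other agent's MMS on the residual instance simultaneously, even though the removed bundles are chosen with respect to different agents' costs. I would try to prove it with an invariant of the form $\sum_j c_i(e_j)\le k$ and $c_i(e_j)\le 1$ on the residual instance, where $k$ is the number of remaining agents, instead of $\MMS_i\le1$. I would then check that bag-filling preserves this invariant.
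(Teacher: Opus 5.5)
Your identical-ordering normalization is harmless, since disposed positions can simply be skipped in the picking sequence. The reduction step built on it, however, imports goods reductions that fail for chores. For goods, removing one agent together with a single good, or with $\{e_n,e_{n+1}\}$ in an ordered instance, never lowers anyone's maximin share. For chores the direction flips: every remaining agent loses a bundle of capacity $1$, so the removed bundle must carry a substantial part of each remaining agent's cost. The natural sufficient condition is that it contains a whole bundle of one of her MMS partitions, and neither of your rules guarantees this. Concretely, take $n=2$ and identical costs $1,0.3,0.3,0.2,0.2$. The MMS is $1$, yet after $\{e_2,e_3\}=\{e_n,e_{n+1}\}$ leaves with one agent, the other faces $1.4$. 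With identical costs $0.6,0.6,0.4,0.4$, handing $e_1$ (which costs more than $1/2$ to everyone) to one agent again leaves the other with $1.4$. The base case has the same defect. Bag-filling with chores below $1/3$ only guarantees that each handed-out bag costs more than $2/3$ to the agents still waiting, so the last agent's remaining cost is bounded only by $2$. In fact, no argument using only ``total cost at most $4$ and every chore below $1/3$'' can work. Fourteen chores of cost $2/7$ satisfy both conditions, but a bundle of cost at most $1$ holds at most three of them, so even after discarding $e_1$ a second disposal is forced.

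The decisive step, that a single disposal lets you remove two agents, is only asserted, and it is not a technicality. Iterating it would give about $n/2-1$ disposals, which is much stronger than the theorem; the paper lists bounds of order $n/2$ as open. The invariant you suggest (total cost at most $k$, every chore at most $1$) cannot support this step, nor even the bound $n-3$. Take $k$ identical agents and $2k-1$ chores of cost $k/(2k-1)$: the invariant holds, yet each bundle of cost at most $1$ holds a single chore, forcing $k-1$ disposals. The variant that discards $e_1,\dots,e_{n-3}$ and then bag-fills is likewise only sketched. Total-cost information of this kind yields only the $n-2$ bound: match the agents outside a maximum Hall violator in agent $n$'s MMS-feasibility graph, then run round-robin, which is PROP1 with one PROP bundle.

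The missing idea is a way to keep exploiting MMS structure after a bundle is removed. In the paper's hard case $S=N\setminus\{n\}$, at least two bundles $P,Q$ of agent $n$'s partition are rejected by all of $S$. The paper shows that $M\setminus P$ or $M\setminus Q$ becomes $(n-1)$-packable for agent $1$ after setting aside its $n-3$ costliest chores. Otherwise, each complement contains a minimum non-packable core of size $(n-1)r+1$ in which any $r+1$ chores cost more than $1$. After restoring the set-aside chores and bounding the overlap via $c_1(M\setminus(P\cup Q))<n-2$, a pigeonhole count finds two bundles of agent $1$'s own MMS partition with total cost above $2$, a contradiction. One then rematches on the new partition, whose Hall violators exclude agent $1$. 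The set-aside chores are returned before round-robin, so they add no disposals.
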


For four agents, we therefore obtain an optimal bound
of one disposal while preserving every agent's original MMS
guarantee. With this result, we complement a line of work on MMS approximations
specifically for four agents. For additive goods, the guarantees
have improved from $3/4$~\cite{kurokawa2018fair} to
$4/5$~\cite{ghodsi2018fair}. For additive chores,
Qiu et al.~\cite{qiu2026allbutone} establish an all-but-one
guarantee: three agents receive their MMS guarantees, while
the fourth incurs cost at most $4/3$ times her MMS. For the lower bound, we use the result of Ezra and
Garbuz~\cite{ezra2026impossibility}, who construct a four-agent,
eleven-chore instance admitting no \(\alpha\)-MMS {allocation
for \(\alpha<31/30\).}

In our proof, we combine a matching-based framework with round-robin
guarantees~\cite{zhou2022efx,aziz2019goodschores} and a structural
argument for repartitioning chores. With matching and round-robin, we can already obtain a bound of \(n-2\) disposals; with our structural
argument, we save one additional disposal, establishing the
\(n-3\) bound.
Matching-based methods have a substantial history in fair
allocation. Gan et al.~\cite{gan2019envy} use matchings to find
envy-free house assignments whenever they exist.
Kobayashi et al.~\cite{kobayashi2023efx} use related techniques
to establish EFX allocations for several classes of chore
instances, and Qiu et al.~\cite{qiu2026allbutone} use them to
guarantee MMS for all but one agent when allocating chores.
For goods, Bu et al.~\cite{bu2026comparison} use matchings to
compute fair allocations {through comparison-based queries.}

\subsection{Further Related Work}

We next review work on exact MMS existence under restrictions on both
the number of items and agents' preferences. For additive goods,
Feige et al.~\cite{feige2022negative} prove existence whenever
\(m\le n+5\) and give a three-agent, nine-good counterexample.
Hummel~\cite{hummel2023lower} extends the existence guarantee to
\(m\le n+6\) when \(n\ne3\). For additive chores,
Hsu~\cite{hsu2024mixedmanna} establishes existence whenever
\(m\le n+5\). Hummel also shows that, for both goods and chores
and every fixed positive integer \(c\), MMS allocations exist
whenever \(m\le n+c\) and \(n\) {is sufficiently large.} More recently, Feige~\cite{feige2026fewitems} proves that the maximum
number of items for which MMS existence is universally guaranteed is
\(n+\Theta(\log n)\), both for goods and for chores under general
monotone preferences.

Another line of work identifies classes of preferences that admit
exact MMS allocations without restricting the number of items.
Ebadian et al.~\cite{ebadian2022easy} establish existence for both
goods and chores under weakly lexicographic preferences and under
personalized bivalued preferences with an integer ratio between
each agent's two positive values or costs.
Garg et al.~\cite{garg2025binpacking} establish existence for
factored chore costs. In a recent preprint,
Zhang~\cite{zhang2026personalized} removes the integer-ratio
restriction for personalized bivalued goods and chores, allowing
each agent to have her own pair of nonnegative values or costs.
We also know that exact MMS allocations exist in some nonadditive settings:
Barman and Verma~\cite{barman2021matroid} establish it for goods
with matroid-rank valuations, while Barman
et al.~\cite{barman2023supermodular} establish it for chores with
supermodular costs whose marginal costs {belong to \(\{0,1\}\).}

Other work relaxes the MMS requirement by changing the benchmark
or the agents receiving it.
Ordinal approximations compare an allocation with the MMS obtained
by partitioning into a different number of
bundles~\cite{akrami2024improving,hosseini2022ordinal,garg2025binpacking}.
For goods, Hosseini and Searns~\cite{hosseini2021guaranteeing} study
guarantees that satisfy a fraction of the agents. For chores, Qiu
et al.~\cite{qiu2026allbutone} preserve exact MMS for \(n-1\) agents
while bounding the remaining agent's cost by an approximation factor.

Leaving items unassigned has also been studied under envy-based
fairness notions, with guarantees on either the welfare retained
or the number of unassigned items. For additive goods,
Caragiannis et al.~\cite{caragiannis2019donating} show that a
partial allocation that is envy-free up to any good (EFX) can
retain at least half the optimal Nash social welfare of the
original instance. Chaudhury
et al.~\cite{chaudhury2021charity} obtain EFX allocations under
monotone valuations while leaving fewer than \(n\) goods unassigned.
Berger et al.~\cite{berger2022almost} show that at most one
unassigned good suffices for four agents with additive valuations,
and Mahara~\cite{mahara2024extension} establishes a bound of
\(n-2\) unassigned goods for general monotone valuations.
For chores, Tao et al.~\cite{tao2025binary} obtain partial
envy-free allocations with at most \(n-1\) unassigned chores
when all marginal costs {belong to \(\{0,1\}\).}

\section{Preliminaries}
\label{sec:preliminary}

We study the allocation of a set $M$ of $m$ indivisible chores among a set $N=[n]$ of agents. Each agent $i\in N$ has an additive cost function $c_i:2^M\to\mathbb{R}_{\geq 0}$. We denote an instance by $I=(N,M,\{c_i\}_{i\in N})$.
For any positive integer $k$, a $k$-partition of $M$ is a tuple $\mathbf{P}=(P_1,\ldots,P_k)$ of pairwise disjoint, possibly empty bundles whose union is $M$. We denote the set of all such partitions by $\prod_k(M)$. An allocation is an $n$-partition $\bX=(X_1,\ldots,X_n)$ of a subset of chores in $M$, where $X_i$ is the bundle assigned to agent $i$.
An allocation is \emph{complete} if $\bigcup_{i\in N}X_i=M$.
For any integer $t$, we use $[t]$ to denote the set $\{1,2,\ldots,t\}$.
For ease of notation, for a bundle $S$ and an item $e$, we use $S+e$ and $S-e$ to denote $S\cup \{e\}$ and $S\setminus \{e\}$, respectively. 
We focus on the fairness notion of maximin share, formally {defined as follows.}
\begin{definition}[MMS]
    Given an instance $I=(N,M,\{c_i\}_{i\in N})$, the maximin share (MMS) value of agent $i$ is defined as
    \begin{equation*}
        \MMS_i :=\min_{\bX \in \prod_n(M)} \max_{j \in [n]}  \left\{c_i(X_j)\right\}.
    \end{equation*}
    We call an $n$-partition $\mathbf{P} = (P_1,P_2,\ldots,P_n)$ of $M$ an {MMS partition} of agent $i$ if $c_i(P_j)\leq \MMS_i$ holds {for all $j\in [n]$.}
\end{definition}

If $\MMS_i=0$ for some agent $i$, then all her chore costs are zero, so we obtain a complete MMS allocation by assigning all chores to her. We therefore assume that $\MMS_i>0$ for all $i\in N$ and scale the cost functions so that $\MMS_i = 1$ for all $i\in N$.
We seek an allocation, not necessarily complete, such that $c_i(X_i) \leq 1$ for all agents.
Since complete MMS allocations are not guaranteed to exist, our goal is to minimize the number of unallocated items, i.e., $|M\setminus (\bigcup_{i\in N}X_i)|$, while preserving each agent's MMS guarantee.
For every $n\geq 3$, some instances require at least {one unallocated item~\cite{ezra2026impossibility,li2025settling}.}

\begin{definition}[MMS Allocation]
    An allocation $\bX = (X_1,\ldots,X_n)$ is MMS if $c_i(X_i)\leq 1$ for all $i\in N$. It is a complete MMS {allocation if $\bigcup_{i\in N}X_i=M$.}
\end{definition}

We also consider {proportionality and PROP1.}

\begin{definition}[PROP Allocation]
    An allocation $\bX=(X_1,\ldots,X_n)$ is \emph{proportional} (PROP)
    if $c_i(X_i)\leq {c_i(M)}/{n}$ holds {for all $i\in N$.}
\end{definition}

\begin{definition}[PROP1 Allocation]
    An allocation $\bX=(X_1,\ldots,X_n)$ is
    \emph{proportional up to one chore} (PROP1)
    if, for every agent $i\in N$
    with $X_i\neq\emptyset$, there exists a chore $e\in X_i$ such that
    \[
        c_i(X_i - e)\leq {c_i(M)}/{n}.
    \]
\end{definition}

We will describe partial MMS allocations by explicitly stating how many
chores remain unallocated. Throughout, we compute the MMS benchmark from
the original set $M$, even when the final allocation leaves {some chores
unassigned.}

\begin{definition}[MMS-feasibility Graph]\label{chores-MMS-feasibility graph}
    Given an $n$-partition $\mathbf{P}=(P_1,\ldots,P_n)$,
    its {MMS-feasibility graph} is the bipartite graph $G = (N \cup \mathbf{P}, E)$
    with agent vertices $i\in N$, bundle vertices $P_j$ indexed by
    $j\in N$, and edge set
    $
        E=\{(i,P_j): i,j\in N,\ c_i(P_j)\leq 1\}.
    $
    For any set of agents $S\subseteq N$, we denote {its neighborhood by}
    $
        L(S)=\{P_j: j\in N,\ c_i(P_j)\leq 1
        \text{ for some } i\in S\}.
    $
\end{definition}

We use the MMS-feasibility graph throughout our analysis.
For example, consider the case of three agents, and form the MMS-feasibility graph of agent $3$'s MMS partition.
If the graph admits a perfect matching, then we obtain an MMS allocation. 
Otherwise, since every agent has at least one neighbor (e.g., her favorite bundle) and agent $3$ accepts all three bundles, some bundle $P_i$ is adjacent only to agent $3$.
After assigning $P_i$ to agent $3$, we obtain a remaining instance with total cost below $2$ for each remaining agent. We then use a divide-and-choose protocol~\cite{qiu2026allbutone} to guarantee both agents
cost at most $1$ after discarding at most one chore from {the divider's bundle.}

\section{Allocation Framework}
\label{four agents}

In this section, we present our main theorem, beginning with special cases to explain {the proof structure.}

\paragraph{Four Agents.}
Consider agent~\(4\)'s MMS partition, and suppose that the other three
agents collectively accept at most two bundles. Assigning a bundle
\(P_j\) that all three reject to agent~\(4\) leaves
\[
c_i(M\setminus P_j)<3
\qquad\text{for }i\in\{1,2,3\}.
\]
Round-robin then guarantees at most two disposals. To obtain one disposal, we seek a new partition into three bundles that are MMS-feasible to some remaining agent. However, we cannot guarantee such a partition from the total-cost bound alone.
We show that, for a suitable choice of \(P_j\), such a partition exists after temporarily setting aside at most one chore. In the new feasibility graph, one agent accepts all three bundles, and each other agent accepts at least one. This restores the starting structure of the three-agent case. If the graph admits a perfect matching, we allocate the matched bundles and discard the set-aside chore. Otherwise, we repeat the matching step from the three-agent case and assign a bundle that is MMS-feasible to only one agent. The two remaining agents then use divide-and-choose to assign all unallocated chores, including any chore temporarily set aside. Thus, at most one
chore is discarded {in either case.}

\medskip

We now formalize the matching-based technique illustrated above and use
it to reduce the proof of our main result to a {single remaining case.}

\begin{theorem}\label{thm:n-3}
Given any instance $I=(N,M,\{c_i\}_{i\in N})$ with $n\ge4$,
there exists an MMS allocation with at most {$n-3$ unallocated chores.}
\end{theorem}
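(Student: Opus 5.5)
We prove Theorem~\ref{thm:n-3} by peeling off agents one at a time using the MMS-feasibility graph, spending a disposal only at the final stage, when few agents remain. Fix agent~$n$ and her MMS partition $\bP=(P_1,\ldots,P_n)$, and form its MMS-feasibility graph $G$. Agent $n$ is adjacent to every bundle. If $G$ has a perfect matching, we obtain a complete MMS allocation.

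Otherwise, we take a maximum matching and apply the standard Hall-violator (König) decomposition. Let $S\subseteq N$ be a minimal set of agents other than $n$ with $|L(S)|<|S|$. Since $n\notin S$, the bundles outside $L(S)$ are adjacent only to agents outside $S$. Matching those agents to bundles outside $L(S)$ in a way that includes $n$ produces a set $T\ni n$ of agents who are each assigned an acceptable bundle, and no agent in $N\setminus T$ accepts any of these bundles. Each remaining agent $i$ therefore values every removed bundle above $1$. Since $c_i(M)\le n$, the reduced instance with $k=n-|T|<n$ agents satisfies $c_i(M')<k$. We iterate the same step inside the reduced instance, but there the original MMS partitions no longer apply. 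The right invariant for the remaining $k$ agents is therefore $c_i(M')<k$ (or $\le k$), rather than an intact MMS partition.

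With only the invariant $c_i(M')<k$ available, round-robin with PROP1 finishes the job. Running round-robin on $M'$ gives each agent a bundle $X_i$ such that removing one chore yields cost at most $c_i(M')/k<1$. We dispose that one chore from all but one agent: the agent who picks last in round-robin envies no one up to the ordering, so her cost is at most $c_i(M')/k$ with no removal. This gives $k-1\le n-2$ disposals, which is the $n-2$ bound mentioned in the introduction. To save one more, we reduce to the three-agent pattern. When $k\ge3$, we seek a repartition of $M'$ (after setting aside at most one chore $e$) into $k$ bundles all acceptable to some agent $a$, choosing the removed bundle $P_j$ suitably. We then rerun the matching step. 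In the worst case, two agents remain, with total cost below $2$ each, and they do divide-and-choose, absorbing $e$ with one disposal. This matches the four-agent sketch above. The disposal count then becomes $(\text{agents handled by round-robin}-1)$ minus the saved one, i.e.\ at most $n-3$. Agents who leave during matching phases cost nothing.

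The main obstacle is the structural repartition step: showing that for an appropriate choice of the bundle assigned to agent $n$ (for instance, one rejected by all others and minimizing some agent's cost for it), some remaining agent $a$ can split $M'\setminus\{e\}$ into $k$ bundles, each of cost at most $1$. I would try to obtain this from agent $a$'s own MMS partition. Her $n$ bundles each cost at most $1$, while the removed bundle costs her more than $1$, so it intersects at least two of her bundles. Merging the leftover parts of those intersecting bundles should then reduce the count by one without exceeding $1$, possibly after moving out a single chore $e$. All other steps are routine matching and round-robin bookkeeping.
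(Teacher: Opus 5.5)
Your outer framework is the paper's. You match on agent $n$'s MMS partition and match the agents outside a Hall violator to bundles its members reject. You finish with round-robin (PROP1 plus one PROP bundle, hence $k-1$ disposals), and you save one disposal by repartitioning the complement of a rejected bundle for one agent, rematching, and returning the set-aside chores. The gap is the step you yourself flag as the main obstacle, and it is the entire technical content of the theorem. In the case $S=N\setminus\{n\}$ you must find $j\in J$ and $a\in S$ such that $M\setminus P_j$ becomes $(n-1)$-packable for $a$ after setting chores aside, and you give no argument for this.

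You also aim for a single set-aside chore. For $n\ge5$ this is stronger than necessary, and stronger than anything the paper proves. Up to $n-3$ set-aside chores are harmless: in the perfect-matching branch they are the only disposals, and otherwise they are returned before round-robin among a violator of size at most $n-2$.

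Your merging heuristic works only when $P_j$ meets exactly two of $a$'s MMS bundles $A_h,A_{h'}$; then the leftovers cost $c_a(A_h)+c_a(A_{h'})-c_a(P_j)<1$. If $P_j$ meets $t\ge3$ of them, you must repack $t$ leftovers of total cost below $t-1$ into $t-1$ bundles, which is the original problem again.

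Your suggested selection rule already fails for $n=4$. Let agents $1,2,3$ share a cost $c$ with MMS partition $A_h=\{x_h,y_h,z_h\}$, $h\in[4]$, where $c(x_h)=c(y_h)=0.37$ and $c(z_h)=0.26$. Let agent $4$'s MMS partition be $P_1=\{z_1,z_2,z_3,z_4\}$, $P_2=\{x_1,x_2,x_3\}$, $P_3=\{x_4,y_1,y_2\}$, $P_4=\{y_3,y_4\}$, each of cost $1$ under $c_4$. Then $J=\{1,2,3\}$ and the cheapest rejected bundle is $P_1$ (cost $1.04$). But $M\setminus P_1$ consists of eight chores of cost $0.37$, any three of which cost more than $1$. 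So it is not $(3,1)$-packable; once $P_1$ goes to agent $4$, two disposals are unavoidable. Meanwhile $M\setminus P_2$ is $3$-packable.

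The paper's missing ideas are these. First, set aside the $n-3$ costliest chores of $M\setminus P$, so that the rest satisfies $c_1(M')<(n-1)-(n-3)\max_{e\in M'}c_1(e)$. This slack is exactly what makes every swap feasible in a minimal non-packable subset. That yields Lemma~\ref{lem:critical-core}: a core of size $(n-1)r+1$ in which any $r+1$ chores cost more than $1$. With one set-aside chore the slack bound degrades to $b-(k-2)a$ and the swap argument breaks for $k\ge4$, so your stronger target would need a genuinely new argument.

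Second, $P_j$ is not chosen by any rule. Assuming two rejected bundles $P,Q$ both fail, the enlarged cores $\hat M_P,\hat M_Q$ are played against agent $1$'s own MMS partition by counting (Lemmas~\ref{lem:large-union}--\ref{lem:disjoint-overload}). This produces two of her MMS bundles of total cost above $2$, a contradiction.

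Two smaller slips. Use a maximum-cardinality Hall violator, not a minimal one. Maximality is what guarantees (Lemma~\ref{lemma:pm_outside_S*}) that the agents outside $S$ are matched to bundles no remaining agent accepts, and hence that $c_i(M')<k$. Also, in chore round-robin the PROP agent is not the last picker: two identical agents and chores of cost $0,0,0,10$ give the last picker cost $10>5$. It is the agent whose turn would follow the final pick, for example the first picker when $k$ divides the number of chores.
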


Recall that we assume $\MMS_i=1$ for every agent. 
Therefore, we have $c_i(M)\le n$ and $c_i(e)\le1$ for every chore $e$.
Furthermore, each agent has an MMS partition into $n$ bundles of cost at most $1$.
Let $\bP=\{P_1,\ldots,P_n\}$ be an MMS partition of agent~$n$,
and construct its MMS-feasibility graph. Agent~$n$ accepts all
$n$ bundles. Every other agent also has a neighbor.
If the graph admits a perfect matching, then we obtain a complete MMS allocation. Otherwise, by Hall's theorem, we find a set of agents $S$ with
$|L(S)|<|S|$; choose such a set of maximum cardinality.
Note that $2\le |S|\le n-1$ and agent $n$ is not contained in $S$.
By the maximality of $S$, we can match every agent outside $S$
to a bundle rejected by all agents in $S$. We formalize this observation in the following lemma.

\begin{lemma}\label{lemma:pm_outside_S*}
There exists a matching saturating $N\setminus S$ that uses
only {bundles in $\bP\setminus L(S)$.}
\end{lemma}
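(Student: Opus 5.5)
We prove Lemma~\ref{lemma:pm_outside_S*} by applying Hall's theorem to the bipartite subgraph $H$ induced on $N\setminus S$ and $\bP\setminus L(S)$. The size count is immediate: $|\bP\setminus L(S)| = n-|L(S)| > n-|S| = |N\setminus S|$. So there are enough bundles, and the task is to verify Hall's condition inside $H$ using the maximality of $S$.

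Suppose Hall's condition fails in $H$. Then some nonempty $T\subseteq N\setminus S$ satisfies $|L(T)\setminus L(S)|<|T|$, where $L(T)\setminus L(S)$ is exactly the neighborhood of $T$ in $H$. Consider $S' = S\cup T$. Since $L(S') = L(S)\cup L(T)$ is a disjoint union of $L(S)$ and $L(T)\setminus L(S)$, we get
\[
|L(S')| = |L(S)| + |L(T)\setminus L(S)| < |S| + |T| = |S'|.
\]
Thus $S'$ also violates Hall's condition in the full feasibility graph. But $|S'|>|S|$, contradicting the choice of $S$ as a Hall-violating set of maximum cardinality. Hence Hall's condition holds in $H$, and $H$ has a matching saturating $N\setminus S$. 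By construction this matching uses only bundles in $\bP\setminus L(S)$, that is, bundles rejected by every agent in $S$.

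The only subtle point is to use neighborhoods in the subgraph $H$, that is, $L(T)\setminus L(S)$ rather than $L(T)$, so that the union bound becomes an exact additive decomposition. There is also an edge case: a set $T$ containing agent $n$ can still appear here. Then $S'$ contains agent $n$, so $L(S')=\bP$, which forces $|S'|>n$. This is impossible, so no separate treatment is needed; the counting above already rules it out.
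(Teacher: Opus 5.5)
Your proof is correct and matches the paper's: restrict to $N\setminus S$ and $\bP\setminus L(S)$, take a Hall violator $T$ there, and note that $|L(S\cup T)|=|L(S)|+|L(T)\setminus L(S)|<|S\cup T|$, so $S\cup T$ is a strictly larger violator, contradicting maximality. Your remarks on the bundle count and on the case $n\in T$ are valid but unnecessary, since the maximality contradiction already covers all cases.
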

\begin{proof}
Restrict the graph to the agents in $N\setminus S$ and bundles
in $\bP\setminus L(S)$. If no matching saturates $N\setminus S$,
we find, by Hall's theorem, a nonempty set $S'\subseteq N\setminus S$
with fewer than $|S'|$ neighbors in this restricted graph. Its
neighborhood is $L(S')\setminus L(S)$, and hence
\[
|L(S')\setminus L(S)|<|S'|.
\]
The additional neighbors contributed by $S'$ are exactly those
outside $L(S)$. Since $S$ and $S'$ are disjoint,
\[
\begin{aligned}
|L(S\cup S')|
&=|L(S)|+|L(S')\setminus L(S)|\\
&<|S|+|S'|=|S\cup S'|.
\end{aligned}
\]
Thus $S\cup S'$ is a Hall violator strictly larger than $S$,
contradicting the choice of $S$. Figure~\ref{fig:overview-hall-reduction} illustrates {this matching technique.}
\end{proof}

\begin{figure}[htbp]
  \centering
  \resizebox{0.50\linewidth}{!}{%
  \begin{tikzpicture}[
    x=.08cm,y=.072cm,line join=round,
    every node/.style={
      anchor=center,text=mmsInk,
      font=\fontsize{9}{10.03}\selectfont,
      inner sep=0pt,outer sep=0pt
    },
    feasible/.style={
      draw=mmsGray,line width=.6pt,line cap=round,
      dash pattern=on 2.7pt off 2.7pt
    },
    assigned/.style={
      draw=mmsBlue,line width=1.7pt,line cap=round
    }
  ]
    \definecolor{mmsInk}{RGB}{32,36,42}
    \definecolor{mmsBlue}{RGB}{40,95,134}
    \definecolor{mmsGray}{RGB}{139,146,153}
    \definecolor{mmsLight}{RGB}{241,242,243}

    \path[use as bounding box] (0,-13) rectangle (116,125);

    \path[
      fill=mmsLight,draw=mmsGray,line width=.65pt,
      dash pattern=on 3pt off 2.5pt
    ] (18,51) rectangle (40,105);

    \path[
      fill=mmsLight,draw=mmsGray,line width=.65pt,
      dash pattern=on 3pt off 2.5pt
    ] (66,69) rectangle (82,105);

    \foreach \y in {99,57} {
      \foreach \z in {99,75} {
        \draw[feasible] (29,\y) -- (74,\z);
      }
    }

    \foreach \z in {99,75,57,39} {
      \draw[feasible] (29,9) -- (74,\z);
    }

    \draw[assigned] (29,39) -- (74,39);
    \draw[assigned] (29,9) -- (74,9);

    \foreach \a/\y in {1/99,k/57,{k+1}/39,n/9} {
      \path[fill=white,draw=mmsInk,line width=.7pt]
        (29,\y) circle[radius=.3cm];
      \node at (29,\y) {$\scriptstyle\a$};
    }

    \foreach \b/\y in {1/99,\ell/75,k/57,{k+1}/39,n/9} {
      \path[fill=white,draw=mmsInk,line width=.7pt]
        (68.5,{\y-3.5}) rectangle (79.5,{\y+3.5});
      \node at (74,\y) {$P_{\b}$};
    }

    \foreach \x/\y in {29/78,29/24,74/87,74/66,74/24} {
      \node at (\x,\y) {$\vdots$};
    }

    \node[font=\fontsize{9}{9.44}\selectfont]
      at (29,110) {Agents};
    \node[font=\fontsize{9}{9.44}\selectfont]
      at (74,110) {Bundles};

    \node[anchor=east] at (13,78) {$S$};
    \node[anchor=west] at (87,91) {$L(S)$};

    \node[
      anchor=west,
      font=\fontsize{9}{8.26}\selectfont
    ] at (87,82) {$\ell<k$};

    \node[
      anchor=west,
      font=\fontsize{9}{8.26}\selectfont
    ] at (87,57) {unmatched};

    \draw[feasible] (6,-6) -- (13,-6);
    \draw[assigned] (51,-6) -- (58,-6);

    \node[
      anchor=west,
      font=\fontsize{9}{8.26}\selectfont
    ] at (16,-6) {feasible};

    \node[
      anchor=west,
      font=\fontsize{9}{8.26}\selectfont
    ] at (61,-6) {outside matching};
  \end{tikzpicture}%
  }
  \caption{Matching-Based Technique}
  \label{fig:overview-hall-reduction}
\end{figure}

After matching the agents outside \(S\), we allocate the remaining
chores among the agents in \(S\) by round-robin, using the following
guarantee.

\begin{lemma}[{\cite[Lemma~13]{zhou2022efx};
\cite[Proposition~2]{aziz2019goodschores}}]
\label{lem:prop}
For every additive chore instance, consider the round-robin algorithm:
agents take turns cyclically in a fixed order, each selecting a remaining
chore of minimum cost to her, with ties broken arbitrarily, until all
chores have been allocated. The resulting allocation is PROP1, and at
least one agent receives a PROP bundle.
\end{lemma}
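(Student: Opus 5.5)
We prove the two parts of Lemma~\ref{lem:prop} with one charging argument: each chore an agent picks is compared against chores that were still available when she picked it. Because the agent chooses a remaining chore of minimum cost to her, her pick costs her at most any chore that was still available at that moment. Fix the picking order $1,2,\ldots,n$ and number the picks by time $t=1,\ldots,m$. Write $m=(k-1)n+s$ with $1\le s\le n$. Then agents $1,\ldots,s$ pick $k$ chores each and agents $s+1,\ldots,n$ pick $k-1$ each.

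\emph{PROP1.} Fix an agent $i$ with $X_i\neq\emptyset$, and let $g_1,\ldots,g_{k_i}$ be her picks in order. For $r<k_i$, let $B_r$ be the block of chores picked from the time of $g_r$ up to, but not including, the time of $g_{r+1}$.
\begin{itemize}
\item Each $B_r$ contains exactly $n$ chores, since exactly one full cycle of picks separates $g_r$ from $g_{r+1}$.
\item Every chore in $B_r$ was available when $i$ picked $g_r$, so $c_i(g_r)\le c_i(e)$ for all $e\in B_r$. Hence $n\,c_i(g_r)\le c_i(B_r)$.
\item The blocks $B_1,\ldots,B_{k_i-1}$ are disjoint subsets of $M$.
\end{itemize}
Summing over $r<k_i$ gives
\[
c_i(X_i-g_{k_i})=\sum_{r<k_i}c_i(g_r)\le \frac{1}{n}\sum_{r<k_i}c_i(B_r)\le \frac{c_i(M)}{n},
\]
where the last step uses nonnegativity of costs. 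So removing the last chore $g_{k_i}$ certifies PROP1.

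\emph{PROP for some agent.} The difficulty is that the final block of an agent is usually truncated, so the argument above cannot absorb her last pick. We fix this by choosing an agent whose final block ends exactly at time $m$.
\begin{itemize}
\item \emph{Case $s=n$.} Then $m=kn$. Take agent $1$, whose picks occur at times $(r-1)n+1$. Her blocks are the times $(r-1)n+1,\ldots,rn$ for $r=1,\ldots,k$. All $k$ blocks have size $n$, and they cover $M$.
\item \emph{Case $s<n$.} Take agent $s+1$. She picks $k-1$ chores, at times $(r-1)n+s+1$ for $r=1,\ldots,k-1$. Her blocks are the times $(r-1)n+s+1,\ldots,rn+s$. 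The last block ends at $(k-1)n+s=m$, so again every block has exactly $n$ chores. The blocks are disjoint and omit only the first $s$ picks.
\end{itemize}
In both cases the same charging, now applied to every pick including the last, gives $c_i(X_i)\le c_i(M)/n$ for the chosen agent. Hence she receives a PROP bundle.

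The only delicate point is this choice of agent. Naive choices, such as agent $1$ when $s<n$ or the agent making the final pick, leave a partial block of fewer than $n$ chores, and that block cannot pay for the last pick.
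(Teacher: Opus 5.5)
Your proof is correct. The paper does not prove this lemma itself; it only cites Zhou--Wu and Aziz et al. There is therefore no in-paper argument to match, and your block-charging argument is a complete, self-contained proof.

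The key steps all hold:
\begin{itemize}
\item The $n$ chores picked from the time of $g_r$ up to agent $i$'s next pick were all still available when $g_r$ was chosen, so $n\,c_i(g_r)\le c_i(B_r)$.
\item The blocks are disjoint, so nonnegativity gives PROP1 once the last pick is removed.
\item You choose agent $s+1$ (agent $1$ when $s=n$), i.e.\ the agent who would pick next after time $m$. Her final block then ends exactly at time $m$, so every one of her picks is charged to a full block.
\end{itemize}

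A frequently used alternative goes through envy instead. Each agent's picks are nondecreasing in her own cost. Pairing her picks with later picks of any other agent $j$ shows that round-robin is EF1 for chores, with the removed chore always being her last (costliest) pick. Summing these $n-1$ inequalities gives $n\,c_i(X_i-g)\le c_i(M)-c_i(g)$, which is PROP1. Moreover, the agent who would pick next after the final pick can pair every one of her picks with a later pick of each other agent. She is therefore envy-free, and hence PROP.

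Your direct charging against still-available chores avoids all pairwise comparisons and is shorter. The envy-based route is slightly longer but yields, as by-products, the stronger EF1 guarantee and an explicitly envy-free agent. The agent you select is exactly that envy-free agent.
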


Here, we measure proportionality with respect to the chores and
agents participating in round-robin. We will show that these
proportional thresholds are below the original MMS benchmark \(1\).
The PROP bundle therefore requires no disposal, while each other
bundle requires at most one. We therefore discard at {most \(|S|-1\) chores.}
We combine this bound with the matching outside \(S\) in the
following lemma to settle {the case \(|S|\le n-2\).}

\begin{lemma}\label{lem:non-extreme-hall}
If $|S|\le n-2$, then there exists an
MMS allocation with at most {$n-3$ unallocated chores.}
\end{lemma}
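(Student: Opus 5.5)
We combine Lemma~\ref{lemma:pm_outside_S*} with the round-robin guarantee of Lemma~\ref{lem:prop}. Let $k=|S|$ and $\ell=|L(S)|<k$. By Lemma~\ref{lemma:pm_outside_S*}, we match every agent of $N\setminus S$ to a distinct bundle of $\bP\setminus L(S)$. Each such agent receives cost at most $1$, with no disposal. Since $|\bP\setminus L(S)|=n-\ell$ and $|N\setminus S|=n-k$, exactly $k-\ell\ge 1$ bundles outside $L(S)$ remain unmatched. Let $R$ be the union of the bundles of $L(S)$ together with these $k-\ell$ unmatched bundles. Thus $R$ consists of exactly $k$ bundles of $\bP$, and we must allocate $R$ among the $k$ agents of $S$.

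The key estimate is that each $i\in S$ has $c_i(R)<k$. Indeed, every bundle of $L(S)$ costs at most $1$ only if $i$ accepts it. Every bundle not in $L(S)$ is rejected by all agents of $S$, so $c_i(P_j)>1$ for each such bundle. Hence
\[
c_i(R)=c_i(M)-\sum_{P_j\in \bP\setminus R} c_i(P_j) < n-(n-k)=k,
\]
using $c_i(M)\le n$ and the fact that the $n-k$ matched bundles all lie outside $L(S)$. So the proportional threshold of $i$ within the subinstance $(S,R)$ is $c_i(R)/k<1$.

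We now run round-robin on $(S,R)$ and invoke Lemma~\ref{lem:prop}. The resulting allocation is PROP1, and at least one agent receives a PROP bundle, which costs her less than $1$ and needs no disposal. Every other agent $i\in S$ has a chore $e\in X_i$ with $c_i(X_i-e)\le c_i(R)/k<1$; we discard that single chore, or nothing if $c_i(X_i)\le 1$ already. This yields an MMS allocation with at most $k-1$ unallocated chores.

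The count is where the hypothesis enters. Since $k=|S|\le n-2$, we get $k-1\le n-3$, as required. The step needing the most care is the strict inequality $c_i(R)<k$. It relies on all $n-k$ removed bundles being rejected by every agent of $S$, which is exactly what Lemma~\ref{lemma:pm_outside_S*} provides by matching only into $\bP\setminus L(S)$. We also need $n-k\ge 1$, which holds since agent $n\notin S$. If some bundle has $c_i=1$ exactly, this is still acceptable, because rejection means cost strictly greater than $1$.
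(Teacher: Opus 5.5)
Your proposal is correct and follows essentially the same argument as the paper. You match $N\setminus S$ into $\bP\setminus L(S)$ via Lemma~\ref{lemma:pm_outside_S*}, bound $c_i(R)<|S|$ for every $i\in S$ because all $n-|S|$ matched bundles are rejected by $S$, and then run round-robin on $(S,R)$, discarding at most one chore per non-PROP bundle, for at most $|S|-1\le n-3$ disposals.
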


\begin{proof}
By Lemma~\ref{lemma:pm_outside_S*}, assign each agent in
$N\setminus S$ a distinct MMS-feasible bundle outside $L(S)$.
There are $n-|S|$ such bundles. Since none belongs to $L(S)$,
every agent in $S$ assigns cost greater than $1$ to each of them.
Writing $R$ for the remaining chores, we have, by additivity,
\[
c_i(R)<c_i(M)-(n-|S|)
\le n-(n-|S|)=|S|
\qquad(i\in S).
\]
Apply Lemma~\ref{lem:prop} to $S$ and $R$. Since
$c_i(R)/|S|<1$ for every $i\in S$, the PROP bundle is
MMS-feasible. For each other nonempty bundle, we can, by PROP1, remove
a chore to reduce its owner's cost to at most her
proportional threshold, and hence below $1$. Empty bundles
{are already MMS-feasible.}

At most $|S|-1\le n-3$ chores are discarded. The bundles assigned
by matching are disjoint from $R$, and round-robin initially
allocates every chore in $R$. Thus these are the only disposals
in the complete construction.
\end{proof}

\begin{remark}
Using the same argument, we also obtain an MMS allocation with at most
\(n-2\) disposals for every additive chore instance with $n\ge2$ agents. We thereby improve the previous bound of \(\frac{2m}{11}+n\) disposals for additive
chores \cite{barman2025adjusted} and match the number of duplications sufficient for
additive goods \cite{akrami2026augmentation}.
\end{remark}

\medskip

We now consider \(S=N\setminus\{n\}\). Let
\[
J=\{j\in[n]:P_j\notin L(S)\}
\]
be the indices of the bundles rejected by every agent in \(S\). Since
\(|L(S)|<n-1\), we have \(|J|\ge2\). Agent \(n\) accepts each
of these bundles, whereas $c_i(P_j)>1$ holds for every $i\in S$ and $j\in J$.
Thus, assigning any such \(P_j\) to agent \(n\) leaves
\[
c_i(M\setminus P_j)=c_i(M)-c_i(P_j)<n-1
\qquad \forall i\in S.
\]

By this bound, we know that every remaining agent accepts at least one bundle in any partition of the remaining chores into \(n-1\) bundles. We seek a partition such that one remaining agent accepts every bundle, allowing us to repeat the matching argument.
A perfect matching in the new feasibility graph completes the allocation. Otherwise, every Hall violator excludes the agent
who accepts all bundles and therefore has size at most \(n-2\).
Matching outside a maximum-cardinality Hall violator and applying
round-robin then yields at most \(n-3\) disposals.

Since we cannot guarantee such a partition from the total-cost bound alone, we allow at most \(n-3\)
chores to be temporarily set aside when constructing the partition.
For \(n=4\), this recovers the starting structure of the three-agent
case with at most one chore set aside. We use the reduction below to ensure
that these temporary removals do not increase the final disposal
count. We formalize the required packing property with respect
to a single agent's cost function using the following definition.

\begin{definition}[$(k,l)$-Packable]
    A chore set $M'$ is \emph{$k$-packable} to agent $i$ if it can be partitioned into $k$ bundles, each of cost at most $1$ to that agent.
    A chore set $M'$ is \emph{$(k,l)$-packable} to agent $i$ if removing at most $l$ chores leaves a \emph{$k$-packable} set {for that agent.}
\end{definition}

In the following lemma, we show that it suffices to find \(j\in J\)
such that \(M\setminus P_j\) is \emph{\((n-1,n-3)\)-packable} to
some agent in \(S\). We use such a packing to construct
a new feasibility graph. If this graph admits a perfect matching,
we allocate the matched bundles and discard only the at most
\(n-3\) set-aside chores. Otherwise, we match outside a
maximum-cardinality Hall violator and return all set-aside chores
before running round-robin among the unmatched agents. In this
case, only the at most \(n-3\) chores removed from the round-robin
allocation are discarded, so the temporary removals contribute
{no additional disposals.}

\begin{lemma}\label{lem:near-packing-reduction}
Suppose there exist $i\in S$ and $j\in J$ such that
$M\setminus P_j$ is $(n-1,n-3)$-packable to agent $i$.
Then we can find an MMS allocation with at most {$n-3$ unallocated
chores.}
\end{lemma}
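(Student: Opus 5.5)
We assign $P_j$ to agent $n$ and work with the remaining agents $S=N\setminus\{n\}$ and remaining chores $M'=M\setminus P_j$. By hypothesis, agent $i$ can set aside a set $D\subseteq M'$ with $|D|\le n-3$ such that $M'\setminus D$ splits into $n-1$ bundles $Q_1,\dots,Q_{n-1}$, each of cost at most $1$ to agent $i$. Since $j\in J$, we have $c_{i'}(M')<n-1$ for every $i'\in S$. Hence $c_{i'}(M'\setminus D)<n-1$, so every agent in $S$ has cost at most $1$ on some $Q_k$. We then form the MMS-feasibility graph between $S$ and $\mathbf{Q}=(Q_1,\dots,Q_{n-1})$. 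In this graph agent $i$ is adjacent to every bundle and every other agent has at least one neighbor.

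If this graph has a perfect matching, we allocate the matched bundles, give $P_j$ to agent $n$, and discard $D$. That uses at most $n-3$ disposals, and every agent gets cost at most $1$.

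Otherwise, we take a maximum-cardinality Hall violator $T\subseteq S$, so $|L(T)|<|T|$. Since agent $i$ sees all $n-1$ bundles, $i\notin T$, which gives $|T|\le n-2$. We may also assume $|T|\ge 2$, since every agent has a neighbor and so $|T|=1$ cannot violate Hall. The argument of Lemma~\ref{lemma:pm_outside_S*} applies verbatim to the graph on $S$ and $\mathbf{Q}$: it gives a matching of $S\setminus T$ into $\mathbf{Q}\setminus L(T)$. We assign these $n-1-|T|$ bundles, each at cost more than $1$ for every agent of $T$.

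Let $R$ be all unallocated chores, namely the remaining $Q$-bundles together with $D$, so that $R=M\setminus P_j\setminus\bigcup(\text{matched }Q\text{'s})$. For $t\in T$ we have
\[
c_t(R)<c_t(M)-c_t(P_j)-(n-1-|T|)<n-1-(n-1-|T|)=|T|.
\]
Here we used $c_t(P_j)>1$ and $c_t(M)\le n$. We run round-robin on $T$ and $R$ via Lemma~\ref{lem:prop}. One agent gets a PROP bundle of cost below $1$. Each other agent drops at most one chore to reach cost at most $c_t(R)/|T|<1$. This gives at most $|T|-1\le n-3$ disposals, and the chores of $D$ were returned to $R$, so they do not add to this count. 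Agent $n$ keeps $P_j$ with cost at most $1$, and the matched agents keep bundles of cost at most $1$.

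The main point to check is the strict inequality above. It must account for three facts: the bundles matched outside $T$ are rejected by $T$, $D$ is reabsorbed into $R$, and $P_j$ is rejected by all of $S$. The other delicate point is that the maximal-violator argument transfers unchanged to the new graph; this needs only that it is a bipartite graph of equal sides, which holds since both sides have $n-1$ vertices.
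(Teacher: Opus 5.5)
Your proposal is correct and follows essentially the same route as the paper: give $P_j$ to agent $n$, repartition $M\setminus P_j$ using agent $i$'s near-packing, and discard $D$ if the new feasibility graph has a perfect matching. Otherwise you match outside a maximum Hall violator $T$ (which excludes $i$), return $D$, and run round-robin on $T$ using $c_t(R)<|T|$. One small remark: the maximal-violator argument of Lemma~\ref{lemma:pm_outside_S*} does not need equal sides; equal sides is used only to convert ``no perfect matching'' into the existence of a Hall violator.
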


\begin{proof}
Assume without loss of generality that $i=1$. By the assumed
packability, there is a set $D\subseteq M\setminus P_j$ with $|D|\le n-3$,
and a partition $\{B_1,\ldots,B_{n-1}\}$ of
$(M\setminus P_j)\setminus D$ satisfying
\[
c_1(B_h)\le1\qquad \forall h\in[n-1].
\]
Temporarily set aside $D$ and allocate $P_j$ to agent~$n$.
This allocation is MMS-feasible for agent~$n$, and every remaining
agent satisfies
\[
c_i(M\setminus P_j)<n-1\qquad \forall i\in S.
\]
We include the chores in $D$ in this bound, so we can use it
again when they are returned.

Construct the MMS-feasibility graph between the $n-1$ agents in
$S$ and the bundles $B_1,\ldots,B_{n-1}$. Agent~$1$ is adjacent
to every bundle by construction. Every other agent has a neighbor
because, by the nonnegativity of costs, we have
\[
\sum_{h=1}^{n-1}c_i(B_h)
=c_i\bigl((M\setminus P_j)\setminus D\bigr)
\le c_i(M\setminus P_j)<n-1.
\]
In particular, these $n-1$ bundles cannot all have cost greater
than $1$ to the same agent.

If the new graph has a perfect matching, assign its matched
bundles to the agents in $S$. Every agent then receives an
MMS-feasible bundle, and precisely the chores in $D$ remain
unallocated. This uses at {most $|D|\le n-3$ disposals.}

Otherwise, let $T$ be a maximum-size Hall violator in the new
graph. It excludes agent~$1$, who accepts all $n-1$ bundles,
and hence $|T|\le n-2$. Apply Lemma~\ref{lemma:pm_outside_S*}
to this graph and match every agent in $S\setminus T$ to a
distinct bundle outside $L(T)$. There are $n-1-|T|$ matched
bundles. Let $K\subseteq[n-1]$ be their index set, so that each
agent in $T$ assigns cost greater than $1$ to every {$B_h$ with
$h\in K$.}

Allocate these matched bundles, return $D$, and let $R$ be the
set of all remaining chores. Equivalently,
\[
R=(M\setminus P_j)\setminus\bigcup_{h\in K}B_h.
\]
Note that $R$
includes all of $D$. For every $i\in T$,
\[
\begin{aligned}
c_i(R)
&=c_i(M\setminus P_j)-\sum_{h\in K}c_i(B_h)\\
&<(n-1)-(n-1-|T|)=|T|.
\end{aligned}
\]
By Lemma~\ref{lem:prop}, we now obtain a PROP1 allocation of $R$
to the agents in $T$ with at least one PROP bundle. As before,
we know that every proportional threshold is below $1$. By removing at most
one chore from each of the other bundles, we give every remaining
agent her original MMS guarantee. We use at {most
$|T|-1\le n-3$ disposals.}

Only chores left unallocated at the end count as disposals.
In the perfect-matching case, only \(D\) is discarded.
Otherwise, all chores in \(D\) are returned before round-robin.
The resulting allocation requires at most \(|T|-1\) disposals
in total, including any returned chores from \(D\) that are
subsequently discarded. Thus, we discard at most
\(|D|\le n-3\) chores in the first case and \(|T|-1\le n-3\) in the
second.
Figure~\ref{fig:thm41-near-packing} illustrates {the two outcomes.}
\end{proof}

\begin{figure}[!htbp]
    \centering
    \definecolor{mmsInk}{RGB}{32,36,42}
    \definecolor{mmsblue}{RGB}{40,95,134}
    \definecolor{mmspale}{RGB}{234,241,247}

    \resizebox{0.50\linewidth}{!}{%
    \begin{tikzpicture}[
        x=1cm, y=.90cm,
        font=\fontsize{9}{9.4}\selectfont,
        >=latex,
        agent/.style={
            circle, draw, fill=white,
            minimum size=7mm, inner sep=1pt
        },
        bundle/.style={
            rectangle, draw, fill=white,
            minimum size=7.5mm, inner sep=1pt
        },
        feasible/.style={
            dashed, draw=black!45, line width=.6pt
        },
        matched/.style={
            draw=mmsblue, line width=1.7pt
        },
        flow/.style={
            ->, line width=.75pt
        }
    ]
        \path[use as bounding box]
            (0,-1.28) rectangle (8,9.44);

        \node[agent] (an) at (1.92,8.72) {$n$};
        \node[bundle] (Pj) at (5.36,8.72) {$P_j$};
        \draw[matched] (an) -- (Pj);

        \draw[dashed, draw=mmsblue, fill=mmspale]
            (6.08,8.16) rectangle (7.28,9.28);

        \node[text=mmsblue]
            at (6.68,8.80) {$D$};

        \node[
            text=mmsblue,
            font=\fontsize{9}{8.2}\selectfont
        ] at (6.68,8.40) {set aside};

        \node at (3.68,7.68)
            {Repartition the remaining chores};

        \node[agent] (a1) at (1.92,6.32) {$1$};
        \node[agent] (aLast) at (1.92,4.48) {$n-1$};
        \node at (1.92,5.28) {$\vdots$};

        \node[bundle] (B1) at (5.36,6.72) {$B_1$};
        \node[bundle] (B2) at (5.36,5.76) {$B_2$};
        \node[bundle] (BLast) at (5.36,4.48) {$B_{n-1}$};
        \node at (5.36,5.12) {$\vdots$};

        \foreach \b in {B1,B2,BLast} {
            \draw[feasible] (a1) -- (\b);
        }
        \draw[feasible] (aLast) -- (B2);

        \node[font=\fontsize{9}{8.2}\selectfont]
            at (1.92,6.96) {accepts all};

        \draw[flow] (3.36,3.92) -- (1.76,3.28);
        \draw[flow] (3.92,3.92) -- (5.52,3.28);

        \node at (1.76,2.96) {Perfect Matching};
        \node at (5.52,2.96) {Hall Violator $T$};
        \node at (5.52,2.48) {$|T|\leq n-2$};

        \draw[flow] (1.76,2.56) -- (1.76,1.92);
        \node at (1.76,1.60) {Discard $D$};

        \node at (5.52,2) {Match $[n-1]\setminus T$};

        \draw[flow] (5.52,1.68) -- (5.52,1.04);
        \node at (5.32,.72)
            {Return $D$ $\rightarrow$ Round-Robin};

        \draw[flow]
            (7.28,8.64) -- (7.76,8.64)
            -- (7.76,.72) -- (7.28,.72);

        \node at (4,-.40)
            {Either way: at most $n-3$ discarded chores};

        \draw[feasible] (.48,-.96) -- (1.04,-.96);
        \node[
            anchor=west, text=mmsInk,
            font=\fontsize{9}{8.26}\selectfont,
            inner sep=0pt, outer sep=0pt
        ] at (1.28,-.96) {feasible};

        \draw[matched] (4.08,-.96) -- (4.64,-.96);
        \node[
            anchor=west, text=mmsInk,
            font=\fontsize{9}{8.26}\selectfont,
            inner sep=0pt, outer sep=0pt
        ] at (4.88,-.96) {outside matching};

    \end{tikzpicture}%
    }

    \caption{Repartition and Match}
    \label{fig:thm41-near-packing}
\end{figure}

\section{Existence of \texorpdfstring{$(n-1,n-3)$}{(n-1,n-3)}-Packable Set}
\label{sec:n-3 proof}

\paragraph{Overview.}
By Lemma~\ref{lem:near-packing-reduction}, we need only find
\(j\in J\) such that \(M\setminus P_j\) is
\emph{\((n-1,n-3)\)-packable} to agent~\(1\).
We examine what failure of this property would imply.
Although we cannot establish \emph{\((n-1)\)-packability} from
\(c_1(M\setminus P_j)<n-1\) alone, we remove the most costly
chores to obtain a stronger bound in terms of the largest remaining
cost. If the remaining set is still not \emph{\((n-1)\)-packable},
we use this bound to find a subset in which every bundle of cost
at most~\(1\) satisfies a common cardinality restriction.
We then compare the subsets arising from two different complements
with agent~\(1\)'s original MMS partition to obtain a
contradiction. We first establish the structural property underlying this comparison.

\medskip

Fix $P\in\{P_j\}_{j\in J}$, and suppose that
$M\setminus P$ is not \emph{$(n-1,n-3)$-packable} to agent~$1$.
Let $D$ consist of the $n-3$ most costly chores in $M\setminus P$
under $c_1$, and set $M'=(M\setminus P)\setminus D$.
By our assumption, $M'$ is still not \emph{$(n-1)$-packable}.
Furthermore, every chore in
$D$ costs at least as much as every chore in $M'$.
Together with $c_1(P)>1$ and $c_1(M)\le n$, we obtain
\[
\begin{aligned}
c_1(M')
&=c_1(M\setminus P)-c_1(D)\\
&<n-1-(n-3)\cdot \max_{e\in M'}c_1(e).
\end{aligned}
\]

Next, we describe the structure forced by this inequality
for a general number $k$ of bundles. We will apply {it with
$k=n-1$.}

\begin{lemma}\label{lem:critical-core}
Let $M'$ be a set of chores with nonnegative additive costs $c$
and $c(e)\le 1$ for every $e\in M'$, such that
\[
c(M')<k-(k-2)\cdot \max_{e\in M'}c(e)
\]
for some integer $k\ge 3$.
If $M'$ is not \emph{$k$-packable}, then there exists a subset
{$M^*\subseteq M'$ such that}
\begin{itemize}
    \item $|M^*|=kr+1$ for {some integer $r\ge 1$;}
    \item every subset $B\subseteq M^*$ of size {$r+1$ satisfies
    $c(B)>1$.}
\end{itemize}
\end{lemma}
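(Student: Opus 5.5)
The plan is to sort the chores of $M'$ in non-increasing order of cost, $c(e_1)\ge c(e_2)\ge\cdots\ge c(e_{m'})$, and write $t=\max_{e\in M'}c(e)=c(e_1)$. The candidate sets are prefixes $M^*=\{e_1,\ldots,e_{kr+1}\}$. For a prefix, the second bullet of the lemma reduces to one inequality: every $(r+1)$-subset of $M^*$ costs at least its $r+1$ cheapest chores, so it suffices that
\[
c(e_{(k-1)r+1})+\cdots+c(e_{kr+1})>1 .
\]
So the goal is to find an integer $r\ge1$ with $kr+1\le|M'|$ satisfying this inequality. I would take $r$ to be the \emph{smallest} integer satisfying it, and prove that such an $r$ exists within range by contradiction with the non-packability of $M'$.

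First I would record a quantitative consequence of the hypothesis. Run first-fit decreasing with $k$ bins of capacity $1$. Since $M'$ is not $k$-packable, some chore $e_q$ fails to fit anywhere, so every bin already has load greater than $1-c(e_q)$. With $x=c(e_q)$ this gives
\[
c(M')>k(1-x)+x=k-(k-1)x .
\]
Comparing with $c(M')<k-(k-2)t$ forces $x>\frac{k-2}{k-1}\,t$. Hence all of $e_1,\ldots,e_q$ have costs in the narrow window $\bigl(\frac{k-2}{k-1}t,\;t\bigr]$, so they are nearly uniform. In particular, the number of these chores that fit in one bundle of cost at most $1$ is essentially determined by $t$.

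The main step, and the one I expect to be the obstacle, is to show that if the displayed inequality fails for every $r$ with $kr+1\le|M'|$, then $M'$ is $k$-packable. The packing I would try first is the ``block'' packing suggested by the inequality. Cut the sorted list into consecutive blocks of $k$ chores, and let bundle $h$ take the $h$-th chore of each block (round-robin on the sorted order). Each bundle then contains at most one chore from each block. I would then bound bundle $h$ by a telescoping comparison: replace each of its chores by a cheaper chore from the preceding block, or by a window of $r+1$ consecutive sorted chores. The aim is to show that the failure of the inequality for every $r$ caps each bundle's cost at $1$.

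The naive estimate for this packing only gives bundle cost at most $t+(c(M')-t)/k<1+t/k$, which is too weak. The slack of $t/k$ must be recovered from the near-uniformity proved in the first step and from the exact coefficient $k-2$ in the hypothesis. If a direct bound is out of reach, I would instead restrict attention to the prefix $e_1,\ldots,e_q$ where first-fit failed. Set $r=\lfloor (q-1)/k\rfloor$ and argue that $q=kr+1$ exactly: using near-uniformity, any chore beyond $q$ and outside a multiple of $k$ would fit. Then the first-fit loads, each exceeding $1-x$, together with $x\le$ each chore in the window, give that any $r+1$ chores of the prefix cost more than $1$. Finally I would check $r\ge1$, which follows because $k\ge3$ and $c(e)\le1$ make $q\le k$ impossible: at most $k$ chores always pack one per bundle.
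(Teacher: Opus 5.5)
The reduction to prefixes is sound: if any set works, the top $kr+1$ chores do. The first-fit computation giving $x>\frac{k-2}{k-1}t$ is also correct. The gap is the main step, which you never prove. You concede the block packing is too weak. The fallback then simply asserts that $q=kr+1$ and that the first-fit loads give the $(r+1)$-subset bound.

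Neither claim follows from your ingredients. Your only use of the non-packability of $M'$ is that first-fit fails somewhere, and that is far too weak. Take $k=3$ with costs $0.4$, $0.36$, six chores of cost $0.26$, and one of cost $0.25$.
\begin{itemize}
\item The total is $2.57<3-0.4$, and the window condition $0.25>\tfrac12\cdot 0.4$ holds.
\item First-fit decreasing produces bins $\{0.4,0.36\}$, $\{0.26,0.26,0.26\}$, $\{0.26,0.26,0.26\}$ and fails at the ninth chore.
\item Yet $9\not\equiv 1\pmod 3$, and the whole set is $3$-packable as $\{0.4,0.26,0.26\}$, $\{0.36,0.26,0.26\}$, $\{0.26,0.26,0.25\}$.
\end{itemize}
So the failure prefix need not be non-packable, whereas any set satisfying the lemma's conclusion is. Near-uniformity cannot force $q\equiv 1\pmod k$. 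Likewise, loads exceeding $1-x$ bound only the particular first-fit bins. They say nothing about the sum of the $r$ cheapest chores, which is what the second bullet needs.

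The missing idea is an exchange argument driven by a set that is genuinely non-packable. Take a non-packable set whose removal of a cheapest chore $e^*$ (cost $a$) leaves a $k$-packable set. The shortest non-packable prefix works, or, as in the paper, a minimum-size non-packable subset. Let $b$ be its maximum cost and fix a packing of the rest.
\begin{itemize}
\item Each bin has slack below $a$, since $e^*$ fits nowhere.
\item The total slack exceeds $(k-2)b+a$, so every bin has slack greater than $(k-2)(b-a)\ge b-a$.
\item Hence any single swap stays feasible. Non-packability of the whole set guarantees that $e^*$ still fits nowhere afterwards, so the bound regenerates. First-fit cannot provide this step.
\item Therefore arbitrary rearrangements with fixed bin sizes are feasible.
\end{itemize}
If two bins differed in size, you could put the cheapest chores in a smallest bin and the costliest in a largest bin. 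The small bin plus $e^*$ would then cost no more than the large bin, so $e^*$ would fit, a contradiction. This forces a common size $r$ and $|M^*|=kr+1$. Putting the $r$ cheapest chores into one bin shows they cost more than $1-a$, which is the second bullet.

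Under the lemma's hypotheses the first-fit failure index does end up equal to $kr+1$. But that is only a consequence of this exchange structure, not something your listed ingredients can establish.
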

\begin{proof}
Choose a minimum-size subset $M^*\subseteq M'$ that is not
\emph{$k$-packable}, and let
\[
a=\min_{e\in M^*}c(e)
\quad \text{ and } \quad
b=\max_{e\in M^*}c(e).
\]
By nonnegativity and the assumed total-cost bound, we have
\begin{equation*}
    c(M^*)+(k-2)\cdot b < k.
\end{equation*}
We also have $a>0$: otherwise, removing a zero-cost chore would
leave a smaller set that admits a $k$-packing, and adding that
chore back would preserve feasibility.
Choose $e^*\in M^*$ with $c(e^*)=a$.
By minimality, $M^*\setminus\{e^*\}$ admits a {$k$-packing
$\{B_1,\ldots,B_k\}$.}

We first show that each bundle has enough slack to allow any
two chores to be swapped.
No bundle can accept $e^*$, since this would give a $k$-packing
of $M^*$. Thus we have $1-c(B_h)<a$ for every $h\in[k]$.
On the other hand, we have
\[
\sum_{h=1}^k\bigl(1-c(B_h)\bigr)
=k-c(M^*)+a
>(k-2)\cdot b+a.
\]
For any fixed $h$, the other $k-1$ bundles have total slack
less than $(k-1)\cdot a$. Subtracting this upper bound, we obtain
\[
\begin{aligned}
1-c(B_h)
&>(k-2)\cdot b+a-(k-1)\cdot a\\
&=(k-2)\cdot (b-a)\ge b-a,
\end{aligned}
\]
where we use $k\ge3$ in {the last inequality.}

A swap increases the cost of either affected bundle by at most
$b-a$, so we can perform any swap while preserving
feasibility. We obtain the same bound after the swap:
the resulting partition still packs $M^*\setminus\{e^*\}$,
its total slack is unchanged, and no bundle can accept $e^*$
because $M^*$ is not \emph{$k$-packable}.
We can therefore repeat the preceding calculation for the new partition.
Inductively, we can perform any sequence of swaps while preserving feasibility.
We may consequently rearrange the chores arbitrarily while
keeping the cardinality of {each bundle fixed.}

Using this freedom to rearrange chores, we show that all bundles have
equal cardinality.
Suppose otherwise, and relabel a smallest and a largest bundle
as $B_1$ and $B_k$, with sizes $q$ and $s$, respectively,
where $s\ge q+1$.
Use swaps to place the $q$ cheapest chores of
$M^*\setminus\{e^*\}$ in $B_1$ and the $s$ most costly chores
in $B_k$. These choices can be made disjoint, since the original
partition contains both bundles.
Every chore in $B_1\cup\{e^*\}$ is no more costly than any
chore in $B_k$. As $q+1\le s$, we obtain by nonnegativity
$
c(B_1)+a\le c(B_k)\le1.
$
Thus $B_1$ can accept $e^*$, a contradiction.
All bundles have the same size $r$, and hence $|M^*|=kr+1$.
Moreover, $r\ge1$, since an empty bundle could accept $e^*$,
whose cost is {at most~$1$.}

To prove the second statement, we use swaps to place the $r$
cheapest chores of $M^*\setminus\{e^*\}$ in $B_1$.
The resulting packing is feasible, but $B_1$ still cannot accept
$e^*$, so $c(B_1)+a>1$.
The chores in $B_1\cup\{e^*\}$ are the $r+1$ cheapest chores
of $M^*$, breaking ties consistently with the choice of $e^*$.
Every $(r+1)$-subset of $M^*$, whether or not it contains $e^*$,
has at least this cost and therefore costs more than~$1$.
Figure~\ref{fig:thm41-critical-core} {illustrates this structure.}
\end{proof}
We obtain the counting restriction needed below:
every bundle of cost at most~\(1\) contains at most \(r\) chores
from \(M^*\). Applying the lemma with \(c=c_1\), we bound the number
of such chores in each bundle of agent~\(1\)'s original MMS
partition.

We now use this restriction to complete the proof of
Theorem~\ref{thm:n-3}. If the complements of two rejected bundles
both fail to admit the required packing, we use the lemma to identify
sets of chores with large total cost. We show that these sets
force two bundles of the original MMS partition to have combined
cost greater than~\(2\), contradicting their individual cost
bounds of~\(1\). Thus, at least one complement admits the required
packing, and we apply Lemma~\ref{lem:near-packing-reduction} to finish
the proof.

\begin{figure}[!htbp]
  \centering
  \definecolor{mmsfigblue}{RGB}{36,94,120}

  \resizebox{0.50\linewidth}{!}{%
  \begin{tikzpicture}[
    x=.08cm,y=.072cm,line join=round,
    every node/.style={
      anchor=center,text=mmsInk,
      font=\fontsize{8.5}{10.03}\selectfont,
      inner sep=0pt,outer sep=0pt
    }
  ]
    \definecolor{mmsInk}{RGB}{32,36,42}
    \definecolor{mmsBlue}{RGB}{40,95,134}
    \definecolor{mmsGray}{RGB}{139,146,153}
    \definecolor{mmsPale}{RGB}{234,241,247}
    \definecolor{mmsLight}{RGB}{241,242,243}
    \definecolor{mmsFaint}{RGB}{171,178,185}

    \path[use as bounding box]
      (0,0) rectangle (100,80);

    \path[fill=white,draw=mmsInk,line width=.65pt]
      (19,55.2) rectangle (62,64.8);

    \path[fill=white,draw=mmsInk,line width=.65pt]
      (19,38.2) rectangle (62,47.8);

    \path[fill=white,draw=mmsInk,line width=.65pt]
      (19,18.2) rectangle (62,27.8);

    \draw[
      draw=mmsBlue,line width=.75pt,
      ->,>=latex,shorten <=1pt,shorten >=1pt
    ] (28,54) to[bend left=37.7] (28,49);

    \draw[
      draw=mmsBlue,line width=.75pt,
      ->,>=latex,shorten <=1pt,shorten >=1pt
    ] (36,49) to[bend left=37.7] (36,54);

    \draw[
      draw=mmsGray,line width=.75pt,
      ->,>=latex,shorten <=1pt,shorten >=1pt
    ] (80,59) -- (68,52);

    \path[fill=mmsLight,draw=mmsInk,line width=.7pt]
      (78.5,59.5) rectangle (85.5,66.5);

    \path[fill=mmsPale,draw=mmsBlue,line width=.7pt]
      (23.65,57.65) rectangle (28.35,62.35);

    \path[fill=mmsPale,draw=mmsBlue,line width=.7pt]
      (33.65,57.65) rectangle (38.35,62.35);

    \path[fill=mmsPale,draw=mmsBlue,line width=.7pt]
      (51.65,57.65) rectangle (56.35,62.35);

    \path[fill=mmsPale,draw=mmsBlue,line width=.7pt]
      (23.65,40.65) rectangle (28.35,45.35);

    \path[fill=mmsPale,draw=mmsBlue,line width=.7pt]
      (33.65,40.65) rectangle (38.35,45.35);

    \path[fill=mmsPale,draw=mmsBlue,line width=.7pt]
      (51.65,40.65) rectangle (56.35,45.35);

    \path[fill=mmsPale,draw=mmsBlue,line width=.7pt]
      (23.65,20.65) rectangle (28.35,25.35);

    \path[fill=mmsPale,draw=mmsBlue,line width=.7pt]
      (33.65,20.65) rectangle (38.35,25.35);

    \path[fill=mmsPale,draw=mmsBlue,line width=.7pt]
      (51.65,20.65) rectangle (56.35,25.35);

    \node at (50,75) {Smallest nonpackable set $M^*$};
    \node at (82,63) {$e^*$};

    \node[font=\fontsize{9}{8.26}\selectfont]
      at (82,70) {cheapest};

    \node[font=\fontsize{9}{10.62}\selectfont]
      at (10,60) {$B_1$};

    \node[font=\fontsize{9}{9.44}\selectfont]
      at (45,60) {$\cdots$};

    \node[anchor=west,font=\fontsize{9}{10.62}\selectfont]
      at (66,60) {$r$};

    \node[font=\fontsize{9}{10.62}\selectfont]
      at (10,43) {$B_2$};

    \node[font=\fontsize{9}{9.44}\selectfont]
      at (45,43) {$\cdots$};

    \node[anchor=west,font=\fontsize{9}{10.62}\selectfont]
      at (66,43) {$r$};

    \node[font=\fontsize{9}{10.62}\selectfont]
      at (10,23) {$B_k$};

    \node[font=\fontsize{9}{9.44}\selectfont]
      at (45,23) {$\cdots$};

    \node[anchor=west,font=\fontsize{9}{10.62}\selectfont]
      at (66,23) {$r$};

    \path[fill=mmsInk,draw=none]
      (40,31.8) circle[radius=.02cm];
    \path[fill=mmsInk,draw=none]
      (40,33) circle[radius=.02cm];
    \path[fill=mmsInk,draw=none]
      (40,34.2) circle[radius=.02cm];

    \node[text=mmsBlue,font=\fontsize{9}{8.26}\selectfont]
      at (53,51.5) {swap};

    \node[font=\fontsize{9}{8.26}\selectfont]
      at (83,49) {fits};

    \node[font=\fontsize{9}{8.26}\selectfont]
      at (83,44) {nowhere};

    \node[font=\fontsize{9}{10.62}\selectfont]
      at (50,12) {$|M^*|=kr+1$};

    \node[font=\fontsize{9}{9.44}\selectfont]
      at (50,5) {Any $r+1$ chores in $M^*$ cost more than $1$};

  \end{tikzpicture}%
  }
  \caption{Structure of a nonpackable set.}
  \label{fig:thm41-critical-core}
\end{figure}

\noindent\textbf{Proof of Theorem~\ref{thm:n-3}.}
By Lemma~\ref{lem:near-packing-reduction}, we need only find an index
\(j\in J\) such that \(M\setminus P_j\) is
\emph{\((n-1,n-3)\)-packable} to agent \(1\).
Suppose, for contradiction, that every such complement fails this condition.
Since \(|J|\ge2\), choose two distinct bundles
\(P,Q\in\{P_j\}_{j\in J}\). These bundles are disjoint and satisfy
{\(c_1(P)>1\) and \(c_1(Q)>1\).}

For each \(Z\in\{P,Q\}\), let \(D_Z\) consist of the \(n-3\)
most costly chores in \(M\setminus Z\) under \(c_1\), breaking ties
arbitrarily. By assumption, the remaining
set \((M\setminus Z)\setminus D_Z\) is not
\emph{\((n-1)\)-packable}.
Each chore in \(D_Z\) costs at least as much as every remaining
chore. Since \(c_1(M)\le n\) and \(c_1(Z)>1\), we obtain
\[
\begin{aligned}
c_1\bigl((M\setminus Z)\setminus D_Z\bigr)
&=c_1(M)-c_1(Z)-c_1(D_Z)\\
&<n-1-(n-3)\cdot
\max_{e\in(M\setminus Z)\setminus D_Z}c_1(e).
\end{aligned}
\]
Every remaining chore satisfies \(c_1(e)\le1\), and
\(n-1\ge3\). Thus we can apply Lemma~\ref{lem:critical-core}
with \(k=n-1\) to obtain a subset
\(M_Z^*\subseteq(M\setminus Z)\setminus D_Z\) and an integer
\(r_Z\ge1\) such that
$
|M_Z^*|=(n-1)\cdot r_Z+1,
$
and every \(r_Z+1\) chores in \(M_Z^*\) have total cost {greater
than \(1\).}

We enlarge the obstruction by restoring the removed chores:
$
\hat M_Z=M_Z^*\cup D_Z.
$
Since \(M_Z^*\) and \(D_Z\) are disjoint, we have
\[
|\hat M_Z|=(n-1)\cdot r_Z+1+(n-3)
=(n-1)\cdot (r_Z+1)-1.
\]
Every chore in \(D_Z\) is at least as costly as every chore in
\(M_Z^*\). Hence, every bundle with \(r_Z+1\) chores in \(\hat M_Z\) still
have cost greater than \(1\). Restoring \(D_Z\) therefore
enlarges the obstruction while preserving the restriction that
each bundle of cost at most \(1\) contains at most \(r_Z\) {of
its chores.}

We now compare these enlarged obstructions with agent~\(1\)'s
original MMS partition. 
Let \((A_1,\ldots,A_n)\) be an MMS partition of agent~\(1\),
so that \(c_1(A_h)\le1\) {for every \(h\in[n]\).} 

\begin{figure}[t]
\centering
\definecolor{packingInk}{HTML}{23313E}
\definecolor{packingMuted}{HTML}{687785}
\definecolor{packingBlue}{HTML}{276A9D}
\definecolor{packingOrange}{HTML}{C07426}
\definecolor{packingRed}{HTML}{AC3541}
\resizebox{0.50\linewidth}{!}{%
\begin{tikzpicture}[
  x=.52cm,y=.5cm,
  every node/.style={font=\fontsize{7.5}{9}\selectfont,text=packingInk,align=center},
  brace/.style={decorate,decoration={brace,amplitude=2.5pt},line width=.5pt}
]
\path[use as bounding box] (-1.20,-1.42) rectangle (15.95,9.80);

\fill[packingBlue!8] (0,0) rectangle (9.5,8.5);
\fill[packingOrange!8] (4.5,0) rectangle (14,8.5);
\fill[packingBlue!7!packingOrange!8] (4.5,0) rectangle (9.5,8.5);

\fill[white,opacity=.60] (0,0) rectangle (14,2.3);
\fill[white,opacity=.60] (0,5.7) rectangle (14,8.5);
\draw[draw=packingBlue!75,line width=.4pt,dashed] (9.5,0) -- (9.5,8.5);
\draw[draw=packingOrange!75,line width=.4pt,dashed] (4.5,0) -- (4.5,8.5);

\fill[packingBlue,opacity=.035] (4.9,4.1) ellipse (4.05 and 3.3);
\fill[packingOrange,opacity=.04] (9.5,4.0) ellipse (4.15 and 3.25);
\draw[draw=packingBlue,line width=.55pt,densely dashed] (4.9,4.1) ellipse (4.05 and 3.3);
\draw[draw=packingOrange,line width=.55pt,densely dashed] (9.5,4.0) ellipse (4.15 and 3.25);

\path[draw=packingBlue,fill=packingBlue!20,line width=.65pt]
 (1.25,4.55) .. controls (1.60,5.37) and (4.40,5.36) .. (5.80,4.98)
 .. controls (7.04,4.60) and (7.18,3.56) .. (6.30,3.10)
 .. controls (5.26,2.66) and (2.69,2.76) .. (1.73,3.32)
 .. controls (1.16,3.68) and (1.02,4.13) .. cycle;
\path[draw=packingOrange,fill=packingOrange!22,line width=.65pt]
 (7.64,4.56) .. controls (8.12,5.40) and (10.79,5.40) .. (12.13,5.08)
 .. controls (13.26,4.77) and (13.36,3.70) .. (12.37,3.10)
 .. controls (11.24,2.55) and (8.62,2.64) .. (7.87,3.24)
 .. controls (7.37,3.61) and (7.24,4.00) .. cycle;

\draw[draw=packingInk!75,line width=.5pt] (0,5.7) -- (14,5.7);
\draw[draw=packingInk!65,line width=.4pt] (0,4) -- (14,4);
\draw[draw=packingInk!75,line width=.5pt] (0,2.3) -- (14,2.3);
\draw[draw=packingInk,line width=.6pt] (0,0) rectangle (14,8.5);
\node[anchor=east,font=\fontsize{8.5}{10}\selectfont] at (-.18,8.40) {$M$};

\draw[brace,draw=packingBlue] (0,8.78) -- (9.5,8.78);
\node[text=packingBlue,font=\fontsize{8.5}{10}\selectfont] at (4.75,9.48) {$M\setminus P$};
\draw[brace,draw=packingOrange] (14,-.28) -- (4.5,-.28);
\node[text=packingOrange,font=\fontsize{8.5}{10}\selectfont] at (9.25,-.98) {$M\setminus Q$};
\node[font=\fontsize{8.5}{10}\selectfont] at (2.25,8.02) {$Q$};
\node[font=\fontsize{7.5}{9}\selectfont,inner sep=.5pt] at (7.00,8.02) {$M\setminus(P\cup Q)$};
\node[font=\fontsize{8.5}{10}\selectfont] at (11.75,8.02) {$P$};
\node[text=packingBlue,font=\fontsize{8.5}{10}\selectfont] at (3.6,6.49) {$\hat{M}_P$};
\node[text=packingOrange,font=\fontsize{8.5}{10}\selectfont] at (10.8,6.49) {$\hat{M}_Q$};

\node[text=packingBlue,font=\fontsize{8.5}{10}\selectfont] at (4.05,4.63) {$T_P$};
\node[text=packingBlue] at (4.05,3.41) {$c_1(T_P)>1$};
\node[text=packingOrange,font=\fontsize{8.5}{10}\selectfont] at (10.24,4.63) {$T_Q$};
\node[text=packingOrange] at (10.24,3.41) {$c_1(T_Q)>1$};

\draw[brace,draw=packingInk!75] (14.24,8.5) -- (14.24,0);
\node[font=\fontsize{8.5}{10}\selectfont] at (15.15,7.32) {$A_1$};
\node[text=packingMuted] at (15.15,6.38) {$\vdots$};
\node[font=\fontsize{8.5}{10}\selectfont] at (15.15,4.85) {$A_i$};
\node[font=\fontsize{8.5}{10}\selectfont] at (15.15,3.15) {$A_{i+1}$};
\node[text=packingMuted] at (15.15,1.98) {$\vdots$};
\node[font=\fontsize{8.5}{10}\selectfont] at (15.15,.68) {$A_n$};

\end{tikzpicture}%
}
\caption{Contradiction in the final proof.}
\label{fig:packing-contradiction}
\end{figure}

\paragraph{High-level idea.}
For each \(Z\in\{P,Q\}\), the set \(\hat M_Z\) lies outside
the rejected bundle \(Z\) and contains many large chores,
in the sense that any \(r_Z+1\) of them have total cost
greater than \(1\). Let \(\hat M=\hat M_P\cup\hat M_Q\)
collect these large chores. Agent~\(1\)'s original MMS
partition packs all chores into
\(n\) bundles of cost at most \(1\), so each bundle can
contain only a limited number of these chores.
We use a pigeonhole argument to derive a contradiction (see Figure~\ref{fig:packing-contradiction} {for an illustration}):
\begin{itemize}
    \item We first show that \(\hat M\) contains many
    distinct large chores by bounding the overlap between
    \(\hat M_P\) and {\(\hat M_Q\)
    (Lemma~\ref{lem:large-union}}).

    \item We then use the original MMS partition to find two bundles ($A_i$ and $A_{i+1}$ in Figure~\ref{fig:packing-contradiction}) whose union \(H\) contains sufficiently many chores from {\(\hat M\) (Lemma~\ref{lem:dense-mms-bundles}}).

    \item Finally, we turn this counting bound into a cost contradiction by finding two disjoint collections of large chores in $H$, each costing more than $1$. We then obtain \(c_1(H)>2\), contradicting the
    individual cost bounds of the two MMS {bundles
    (Lemma~\ref{lem:disjoint-overload}}).
\end{itemize}

We first show that \(\hat M\) contains many distinct large
chores by using the cost bound to limit the overlap between
{\(\hat M_P\) and \(\hat M_Q\).}

\begin{lemma}\label{lem:large-union}
{We have \(|\hat M|\ge\frac n2(r_P+r_Q+2)-1\).}
\end{lemma}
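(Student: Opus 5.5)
The plan is to count by inclusion--exclusion and to bound the overlap \(O=\hat M_P\cap\hat M_Q\) with agent~\(1\)'s cost. Write \(s=r_P+r_Q+2\). From \(|\hat M_Z|=(n-1)(r_Z+1)-1\) for \(Z\in\{P,Q\}\), we get
\[
|\hat M|=|\hat M_P|+|\hat M_Q|-|O|=(n-1)s-2-|O|.
\]
So it suffices to show \(|O|\le (n-2)(\rho+1)-1\), where \(\rho=\min(r_P,r_Q)\). Indeed, \((n-2)(\rho+1)\le (n-2)s/2\), and substituting gives \(|\hat M|\ge (n-1)s-2-(n-2)s/2+1=\frac n2 s-1\).

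\emph{Upper bound on the cost of \(O\).} We have \(\hat M_P\subseteq M\setminus P\) and \(\hat M_Q\subseteq M\setminus Q\), so \(O\subseteq M\setminus(P\cup Q)\). Since \(P\) and \(Q\) are disjoint, \(c_1(P)>1\), \(c_1(Q)>1\) and \(c_1(M)\le n\), additivity gives
\[
c_1(O)\le c_1\bigl(M\setminus(P\cup Q)\bigr)<n-2.
\]

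\emph{Lower bound on the cost of \(O\).} Let \(Z\) attain \(r_Z=\rho\); then \(O\subseteq\hat M_Z\). The established property of \(\hat M_Z\) is that every \((\rho+1)\)-subset of \(\hat M_Z\) costs more than \(1\); in particular, the \(\rho+1\) cheapest chores of \(\hat M_Z\) have average cost greater than \(1/(\rho+1)\). Take any \(X\subseteq\hat M_Z\) with \(|X|\ge\rho+1\). The \(|X|\) cheapest chores of \(\hat M_Z\) have average cost at least the average of the \(\rho+1\) cheapest ones, and \(X\) has average cost at least that of the \(|X|\) cheapest. Hence \(c_1(X)>|X|/(\rho+1)\).

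\emph{Case split on \(|O|\).} If \(|O|\ge\rho+1\), applying the lower bound to \(X=O\) and comparing with the upper bound yields \(|O|/(\rho+1)<n-2\). Thus \(|O|<(n-2)(\rho+1)\), and by integrality \(|O|\le(n-2)(\rho+1)-1\). If \(|O|\le\rho\), then \(n\ge4\) gives \(\rho\le 2(\rho+1)-1\le(n-2)(\rho+1)-1\) directly. In both cases the required overlap bound holds, and the lemma follows from the reduction above.

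The main obstacle is the lower bound on \(c_1(O)\). A set of fewer than \(\rho+1\) chores carries no cost guarantee, which is why the small-overlap case is handled separately. For larger sets, the averaging step must use that the \(|X|\) cheapest chores have average at least that of the \(\rho+1\) cheapest, which holds because \(|X|\ge\rho+1\). One should also use the smaller of \(r_P,r_Q\) to get the sharpest bound, and keep the strict inequality so that integrality yields the final \(-1\).
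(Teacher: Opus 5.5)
Your proof is correct and follows essentially the same route as the paper: inclusion--exclusion, the observation that $\hat M_P\cap\hat M_Q\subseteq M\setminus(P\cup Q)$ has cost below $n-2$, and the large-chore property to cap the overlap at $(n-2)(r_Z+1)-1$. The only differences are cosmetic. The paper obtains the cap by splitting $(n-2)(r_Z+1)$ overlap chores into $n-2$ disjoint groups of size $r_Z+1$, which avoids your averaging inequality and case split, and it then averages the two bounds over $Z\in\{P,Q\}$ instead of using $\min(r_P,r_Q)$.
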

\begin{proof}
Because \(\hat M_P\cap P=\varnothing\) and
\(\hat M_Q\cap Q=\varnothing\), their intersection is contained
in \(M\setminus(P\cup Q)\). Since \(P\) and \(Q\) are disjoint, we have
\[
c_1\bigl(M\setminus(P\cup Q)\bigr)
=c_1(M)-c_1(P)-c_1(Q)<n-2.
\]
Fix \(Z\in\{P,Q\}\). If the intersection $\hat M_P\cap\hat M_Q$ contained at least
\((n-2)(r_Z+1)\) chores, we could choose that many distinct
chores and divide them into \(n-2\) disjoint groups of size
\(r_Z+1\). Each group is a subset of \(\hat M_Z\) and hence
costs more than \(1\). Their combined cost would exceed \(n-2\),
contradicting the displayed bound. The intersection thus has fewer
than \((n-2)(r_Z+1)\) chores, i.e.,
\[
|\hat M_P\cap\hat M_Q|\le(n-2)(r_Z+1)-1.
\]
Averaging the bounds for the two choices of \(Z\), we obtain
\[
|\hat M_P\cap\hat M_Q|
\le\frac{n-2}{2}(r_P+r_Q+2)-1.
\]
We therefore obtain
\begin{align*}
|\hat M|
&=|\hat M_P|+|\hat M_Q|-|\hat M_P\cap\hat M_Q|\\
&\ge(n-1)(r_P+r_Q+2)-2-\left(\frac{n-2}{2}(r_P+r_Q+2)-1\right)\\
&=\frac n2(r_P+r_Q+2)-1.
\qedhere
\end{align*}
\end{proof}

We next show that some pair of MMS bundles contains many large {chores from \(\hat M\).}
\begin{lemma}\label{lem:dense-mms-bundles}
There exist two bundles among \(A_1,\ldots,A_n\) whose union
{\(H\) satisfies \(|\hat M\cap H|\ge r_P+r_Q+2\).}
\end{lemma}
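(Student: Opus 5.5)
Pigeonhole over pairs of MMS bundles. Let \(x_h=|\hat M\cap A_h|\) for \(h\in[n]\). Since \((A_1,\ldots,A_n)\) partitions \(M\), we have \(\sum_h x_h=|\hat M|\ge\frac n2(r_P+r_Q+2)-1\) by Lemma~\ref{lem:large-union}. The average of \(x_h\) is therefore at least \(\frac12(r_P+r_Q+2)-\frac1n\). Sort the bundles so that \(x_1\ge x_2\ge\cdots\ge x_n\), and take \(H=A_1\cup A_2\). It then suffices to show \(x_1+x_2\ge r_P+r_Q+2\).

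Suppose for contradiction that \(x_1+x_2\le r_P+r_Q+1\). Then \(x_2\le\lfloor (r_P+r_Q+1)/2\rfloor\), so every \(x_h\) with \(h\ge2\) is at most this value.

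\textbf{Case \(r_P+r_Q\) odd.} Here \(x_2\le (r_P+r_Q+1)/2\). The sum is at most
\[
(r_P+r_Q+1)+(n-2)\tfrac{r_P+r_Q+1}{2}=\tfrac n2(r_P+r_Q+1),
\]
whereas Lemma~\ref{lem:large-union} gives at least \(\tfrac n2(r_P+r_Q+1)+\tfrac n2-1\). Since \(n\ge4\), this is a contradiction.

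\textbf{Case \(r_P+r_Q\) even.} Here \(x_2\le (r_P+r_Q)/2\). The sum is at most
\[
(r_P+r_Q+1)+(n-2)\tfrac{r_P+r_Q}{2}=\tfrac n2(r_P+r_Q)+1,
\]
versus the lower bound \(\tfrac n2(r_P+r_Q)+n-1\). This is again a contradiction, since \(n\ge 3\).

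This case split on parity handles the integer rounding. Without it, a naive averaging argument only gives \(x_1+x_2\ge r_P+r_Q+2-2/n\), which falls short by a fraction. Integrality of the \(x_h\) closes that gap, and the slack \(\frac n2-1\ge1\) in the odd case is exactly where \(n\ge4\) is used.

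The remaining step is to check that the two cases cover all possibilities and that the top-two pair dominates every other pair. Both are immediate from the sorting.
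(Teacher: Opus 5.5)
Your proof is correct and follows the paper's argument: take the two bundles of agent~1's MMS partition containing the most chores of $\hat M$, compare their total with the average via Lemma~\ref{lem:large-union}, and use integrality. The paper handles the rounding in one line: $x_1+x_2\ge 2|\hat M|/n\ge r_P+r_Q+2-2/n>r_P+r_Q+1$, and an integer exceeding $r_P+r_Q+1$ is at least $r_P+r_Q+2$. So your parity split is unnecessary, and, contrary to your closing remark, only $n\ge 3$ is needed in either case.
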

\begin{proof}
Choose the two bundles among \(A_1,\ldots,A_n\) containing the most chores from \(\hat M\),
and let \(H\) be their union. The average of these two counts
is at least the average over all \(n\) bundles. The bundles
partition \(M\), so their counts sum to \(|\hat M|\). Hence we have
\[
|\hat M\cap H|\ge\frac{2|\hat M|}{n}.
\]
Rounding up and applying Lemma~\ref{lem:large-union} with
\(n\ge4\), we obtain
\begin{equation*}
|\hat M\cap H|
\ge\left\lceil\frac{2|\hat M|}{n}\right\rceil
\ge\left\lceil r_P+r_Q+2-\frac2n\right\rceil
=r_P+r_Q+2.
\qedhere
\end{equation*}
\end{proof}

Fix such a union \(H\). The next lemma extracts two disjoint
groups of large chores, one from each of {\(\hat M_P\)
and \(\hat M_Q\).}
\begin{lemma}\label{lem:disjoint-overload}
There exist disjoint subsets \(T_P\subseteq\hat M_P\cap H\)
and \(T_Q\subseteq\hat M_Q\cap H\) such that
\(|T_P|=r_P+1\) and \(|T_Q|=r_Q+1\).
Consequently, {we have \(c_1(H)>2\).}
\end{lemma}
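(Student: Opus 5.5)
The plan is to treat the extraction of \(T_P\) and \(T_Q\) as a small assignment problem on the chores of \(\hat M\cap H\), and then to read off \(c_1(H)>2\) from the defining property of the obstructions. Split \(\hat M\cap H\) into three disjoint classes: \(X=(\hat M_P\setminus\hat M_Q)\cap H\), \(Y=(\hat M_Q\setminus\hat M_P)\cap H\) and \(Z=\hat M_P\cap\hat M_Q\cap H\), with sizes \(x,y,z\). By Lemma~\ref{lem:dense-mms-bundles}, \(x+y+z\ge r_P+r_Q+2\). The natural greedy choice puts chores of \(X\) into \(T_P\) (up to \(r_P+1\) of them), puts chores of \(Y\) into \(T_Q\) (up to \(r_Q+1\)), and fills both sets from the shared class \(Z\). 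This succeeds exactly when
\[
\min(x,r_P+1)+\min(y,r_Q+1)+z\ge r_P+r_Q+2 .
\]
If \(x\le r_P+1\) and \(y\le r_Q+1\), this is immediate from the counting bound. So the only cases to exclude are the lopsided ones, for instance \(x>r_P+1\) together with \(y+z\le r_Q\), which is the same as \(|\hat M_Q\cap H|\le r_Q\). The case with the roles of \(P\) and \(Q\) swapped is symmetric.

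Once disjoint \(T_P\) and \(T_Q\) are found, the conclusion is routine. Both sets lie in \(H\) and \(|T_Z|=r_Z+1\) with \(T_Z\subseteq\hat M_Z\), so the restoration step preceding Lemma~\ref{lem:large-union} gives \(c_1(T_Z)>1\). By disjointness and nonnegativity, \(c_1(H)\ge c_1(T_P)+c_1(T_Q)>2\). But \(H\) is the union of two of agent~\(1\)'s MMS bundles, so \(c_1(H)\le2\), which is the contradiction completing Theorem~\ref{thm:n-3}.

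The main obstacle is the lopsided case, where \(H\) contains fewer than \(r_Q+1\) chores of \(\hat M_Q\) even though it contains many chores of \(\hat M\). The counting alone does not rule this out. The constraint that every \(A_h\) contains at most \(r_Z\) chores of \(\hat M_Z\) only gives upper bounds such as \(x+z\le 2r_P\), and for unbalanced \(r_P,r_Q\) these are compatible with the bad configuration.

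In that case I would try two things. First, in the lopsided case \(x>r_P+1\), so \(H\) contains at least \(r_P+2\) chores of \(\hat M_P\). Since each single bundle of cost at most \(1\) holds at most \(r_P\) of them, I would try to split \(\hat M_P\cap H\) directly into two disjoint groups of size \(r_P+1\). If \(|\hat M_P\cap H|\ge 2r_P+2\), this yields \(c_1(H)>2\) without using \(\hat M_Q\) at all. This is impossible when both bundles are individually feasible, so that sub-case is excluded outright. Second, for the remaining range \(r_P+2\le|\hat M_P\cap H|\le 2r_P+1\), I would revisit the choice of \(H\) in Lemma~\ref{lem:dense-mms-bundles}: pick the pair of bundles to maximize a weighted count of \(\hat M_P\)- and \(\hat M_Q\)-chores, or choose \(H\) by a separate averaging over \(\hat M_P\) and \(\hat M_Q\). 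The averaging in Lemma~\ref{lem:large-union} is symmetric in \(P\) and \(Q\), so the aim is to obtain \(|\hat M_P\cap H|\ge r_P+1\) and \(|\hat M_Q\cap H|\ge r_Q+1\) simultaneously, together with the union bound. Those two inequalities, together with the union bound, are exactly what the greedy assignment needs.
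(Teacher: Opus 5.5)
\textbf{There is a genuine gap: the lopsided case is never excluded.} Your reduction to the greedy condition $\min(x,r_P+1)+\min(y,r_Q+1)+z\ge r_P+r_Q+2$ is correct. So is the final step from disjoint $T_P,T_Q$ to $c_1(H)>2$. But the proof stops at the lopsided case $|\hat M_Q\cap H|\le r_Q$ (or its mirror), offering only two tentative remedies. The second one, re-choosing $H$ by a weighted or separate averaging, is not carried out, so the proposal does not prove the lemma.

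\textbf{Counting does rule this case out.} You applied the cap ``each MMS bundle contains at most $r_Z$ chores of $\hat M_Z$'' only to the two bundles inside $H$, which yields upper bounds. Applied to the $n-2$ bundles \emph{outside} $H$, together with $|\hat M_P|=(n-1)(r_P+1)-1$, the same cap gives a lower bound:
\[
|\hat M_P\cap H|\ \ge\ (n-1)(r_P+1)-1-(n-2)\,r_P\ =\ r_P+n-2\ \ge\ r_P+1,
\]
and likewise $|\hat M_Q\cap H|\ge r_Q+1$. This holds for \emph{every} union $H$ of two MMS bundles, so no re-choice of $H$ is needed. It gives $x+z\ge r_P+1$ and $y+z\ge r_Q+1$. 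Combined with $x+y+z\ge r_P+r_Q+2$ from Lemma~\ref{lem:dense-mms-bundles}, your greedy condition always holds.

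\textbf{How the paper finishes.} If $x\ge r_P+1$, take $T_P\subseteq (\hat M_P\setminus\hat M_Q)\cap H$ and any $r_Q+1$ chores of $\hat M_Q\cap H$ as $T_Q$. Otherwise, put all $x$ chores of $(\hat M_P\setminus\hat M_Q)\cap H$ into $T_P$ and fill it from the intersection, which the lower bound permits. Then every chore of $(\hat M\cap H)\setminus T_P$ lies in $\hat M_Q$, and there are at least $|\hat M\cap H|-(r_P+1)\ge r_Q+1$ of them.

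As a side remark, the cap on the two bundles inside $H$ gives $|\hat M_P\cap H|\le 2r_P$. So your first sub-case, $|\hat M_P\cap H|\ge 2r_P+2$, never occurs.
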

\begin{proof}
For each \(Z\in\{P,Q\}\), an MMS bundle contains at most
\(r_Z\) chores from \(\hat M_Z\): otherwise, it would contain
an \((r_Z+1)\)-subset of cost greater than \(1\).
The \(n-2\) bundles outside \(H\) therefore contain at most
\((n-2)\cdot r_Z\) such chores. Excluding their contribution, we obtain
\[
\begin{aligned}
|\hat M_Z\cap H|
&\ge(n-1)\cdot (r_Z+1)-1-(n-2)\cdot r_Z\\
&=r_Z+n-2\ge r_Z+1.
\end{aligned}
\]
Each set therefore supplies enough chores for its required selection.
To ensure disjointness, choose
\(T_P\subseteq\hat M_P\cap H\) of size \(r_P+1\), using as
many chores outside {\(\hat M_Q\) as possible.}
\begin{itemize}
    \item If at least \(r_P+1\) chores of \(\hat M_P\cap H\) lie outside
\(\hat M_Q\), then \(T_P\) is disjoint from \(\hat M_Q\).
Using the individual bound above, we can choose any \(r_Q+1\)
chores of {\(\hat M_Q\cap H\) as \(T_Q\).}

    \item Otherwise, \(T_P\) includes every chore of
\((\hat M_P\cap H)\setminus\hat M_Q\), together with enough
chores from the intersection to reach size \(r_P+1\).
Every chore of \((\hat M\cap H)\setminus T_P\) then belongs
to \(\hat M_Q\). By Lemma~\ref{lem:dense-mms-bundles}, we have
\[
\bigl|(\hat M_Q\cap H)\setminus T_P\bigr|
=|\hat M\cap H|-(r_P+1)\ge r_Q+1.
\]
Thus we can choose \(T_Q\) of the required size, disjoint from
\(T_P\). In both cases, we use the defining properties of
\(\hat M_P\) and \(\hat M_Q\), together with nonnegative
costs, to obtain
\[
c_1(H)\ge c_1(T_P)+c_1(T_Q)>2.
\]
\end{itemize}

Figure~\ref{fig:mms-three-set-selection} illustrates {the two cases.}
\end{proof}

\begin{figure}[t]
\centering
\resizebox{0.50\linewidth}{!}{%
\begin{tikzpicture}[
  x=.53cm,
  y=.53cm,
  line join=round,
  every node/.style={
    anchor=center,
    text=mmsInk,
    font=\fontsize{8.5}{10.03}\selectfont,
    inner sep=0pt,
    outer sep=0pt
  }
]
\definecolor{mmsInk}{RGB}{32,36,42}
\definecolor{mmsBlue}{RGB}{40,95,134}
\definecolor{mmsGray}{RGB}{139,146,153}
\definecolor{mmsLight}{RGB}{241,242,243}

\path[use as bounding box]
  (-3.55,-4.18) rectangle (11.35,3.0);

\foreach \offset in {0,7.8} {
  \begin{scope}[shift={(\offset,0)}]
    \begin{scope}
      \clip (0,-1.2) circle[radius=2.15];
      \fill[mmsLight]
        (1.25,.7) circle[radius=2.15];
    \end{scope}
  \end{scope}
}

\filldraw[
  fill=mmsBlue!20,
  draw=mmsBlue,
  line width=.65pt
]
  (-1.25,-.45) ellipse[x radius=.50,y radius=.40];

\node[text=mmsBlue] at (-1.25,-.45) {$T_P$};

\begin{scope}[shift={(7.8,0)}]
  \begin{scope}
    \clip (0,-1.2) circle[radius=2.15];
    \clip (-1.25,.7) circle[radius=2.15];

    \fill[mmsBlue!20]
      (-4,-4) -- (-4,4) -- (0,4) -- (0,.95)
      .. controls (.35,.55) and (.35,-.60)
      .. (0,-1.049285)
      -- (0,-4) -- cycle;

    \draw[mmsBlue,line width=.65pt]
      (0,.95)
      .. controls (.35,.55) and (.35,-.60)
      .. (0,-1.049285);
  \end{scope}

  \node[text=mmsBlue] at (-1.10,-.42) {$T_P$};
\end{scope}

\foreach \offset in {0,7.8} {
  \begin{scope}[shift={(\offset,0)}]
    \filldraw[
      fill=mmsGray!45,
      draw=mmsInk,
      line width=.65pt
    ]
      (.87,-.47) ellipse[x radius=.56,y radius=.36];

    \node at (1.02,-.47) {$T_Q$};


  \end{scope}
}

\foreach \offset in {0,7.8} {
  \begin{scope}[shift={(\offset,0)}]
    \draw[mmsGray,line width=.65pt]
      (-1.25,.7) circle[radius=2.15];

    \draw[mmsGray,line width=.65pt]
      (1.25,.7) circle[radius=2.15];

    \draw[
      mmsInk,
      line width=.65pt,
      dash pattern=on 2.5pt off 1.8pt
    ]
      (0,-1.2) circle[radius=2.15];

    \node at (-1.85,2.18) {$\hat M_P$};
    \node at (1.85,2.18) {$\hat M_Q$};
    \node at (0,-2.7) {$H$};
  \end{scope}
}

\node at (0,-3.84) {(a) No overlap used};
\node at (7.8,-3.84) {(b) Overlap used};

\end{tikzpicture}%
}
\caption{Choose disjoint \(T_P\) and \(T_Q\).}
\label{fig:mms-three-set-selection}
\end{figure}

Since \(H\) is the union of two bundles from agent \(1\)'s
MMS partition, we also have \(c_1(H)\le2\), which is a contradiction.
Therefore,
at least one of \(M\setminus P\) and \(M\setminus Q\) is
\emph{\((n-1,n-3)\)-packable} to agent \(1\).
Applying Lemma~\ref{lem:near-packing-reduction}, we obtain an allocation
that satisfies every agent's original MMS guarantee and leaves at
most \(n-3\) chores unallocated, completing the proof {of
Theorem~\ref{thm:n-3}.}

\section{Conclusion and Future Directions}

We studied MMS allocations for indivisible chores with nonnegative
additive costs, allowing a limited number of chores to remain
unassigned. We proved that at most \(n-3\) disposals suffice for
\(n\ge4\), independently of the number of chores, while preserving
every agent's original MMS guarantee. For four agents, we establish
the {optimal worst-case bound.}

The main open question is how far the disposal bound can be reduced.
A first step is to improve \(n-3\) to \(n-4\) for \(n\ge5\), and more
generally to \(n-k\) for larger constants \(k\) and sufficiently
large \(n\). A further direction is to obtain smaller linear bounds,
such as \(n/2\) or \(n/4\), and ultimately a sublinear bound \(o(n)\).
The strongest goal is to prove that, for any number of agents with
nonnegative additive chore costs, discarding at most one chore
always suffices to guarantee every agent {her original MMS.}

\section*{Declaration for the Use of AI}
We used OpenAI's GPT-5.6 Sol, GPT-6 Astra and GPT-6.1 Sol to assist with mathematical analysis, language editing, notation consistency checks, and LaTeX formatting. All mathematical statements, proofs, calculations, and citations were reviewed and verified by the authors, who take full responsibility for the content of the paper.

\clearpage
\bibliographystyle{alpha}
\bibliography{ref}

\clearpage
\appendix
\section{MMS with \texorpdfstring{$n-3$}{n-3} Goods Duplications}
\label{gdup:sec:main}

We prove the goods analogue of Theorem~\ref{thm:n-3} using the same
matching and repartitioning framework. For $n\ge4$, we improve the
$n-2$ bound of Akrami et al.~\cite[Lemma~3.8]{akrami2026augmentation}
to $n-3$, with each good copied at most once and its two instances
assigned to different agents.

\begin{theorem}
\label{gdup:thm:main}
For every instance $I=(N,M,\{v_i\}_{i\in N})$ with $|N|=n\ge4$
and nonnegative additive valuations, there exists a complete
allocation with at most $n-3$ goods duplications that gives every
agent her original MMS guarantee. Each original good is copied
at most once, and no agent receives both the original and its copy.
\end{theorem}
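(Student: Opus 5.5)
We will mirror the proof of Theorem~\ref{thm:n-3} with the inequalities reversed. Normalize $\MMS_i=1$ for every agent, so $v_i(M)\ge n$ and each agent has an MMS partition into $n$ bundles of value at least~$1$. Take agent~$n$'s MMS partition $\bP$ and its feasibility graph, with an edge when $v_i(P_j)\ge1$. If it has a perfect matching we are done. Otherwise choose a maximum Hall violator $S\subseteq N\setminus\{n\}$; the proof of Lemma~\ref{lemma:pm_outside_S*} carries over verbatim. Assign matched bundles outside $L(S)$ to $N\setminus S$. Every $i\in S$ values each such bundle below $1$, so the remaining goods $R$ satisfy $v_i(R)>|S|$.

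For the remaining agents we use the goods version of Lemma~\ref{lem:prop}. Round-robin yields an allocation that is PROP1 in the goods sense, meaning each agent reaches $v_i(R)/|S|$ after adding one good from another bundle. Moreover, the first picker, say, attains PROP up to the standard envy-freeness-up-to-one argument. We make the standard PROP1 bound precise by duplicating, for each agent who falls short, one good held by another agent. The copy then goes to a different agent from the one holding the original. We must ensure that each good is copied at most once. If two agents want the same good, we let one of them take her second choice. Alternatively we argue via EF1 of round-robin: agent $i$ is short only if she envies someone, and she can copy the first pick of the agent before her in the order. Distinct agents then copy distinct goods. This gives $|S|-1\le n-3$ duplications when $|S|\le n-2$.

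In the extreme case $S=N\setminus\{n\}$ we have $|J|\ge2$ rejected bundles, and $v_i(M\setminus P_j)>n-1$ for $i\in S$. The analogue of packability is \emph{coverability}: $M'$ is $(k,l)$-coverable for agent~$i$ if, after duplicating at most $l$ goods, it can be split into $k$ bundles of value at least $1$. The analogue of Lemma~\ref{lem:near-packing-reduction} then goes through. Give $P_j$ to agent $n$, repartition using agent~$1$'s cover, and match. In the Hall-violator case we discard the tentative copies, since the agents in $T$ still satisfy $v_i(R)>|T|$, and then run round-robin with at most $|T|-1$ duplications. Care is needed so that the duplicated goods in the cover lie in different bundles from their originals.

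The main obstacle is the counterpart of Lemma~\ref{lem:critical-core} and the final counting contradiction. We would duplicate the $n-3$ most valuable goods of $M\setminus P$. Write $\beta=\max_{e\in M'}v_1(e)$, where $M'$ is the set of goods other than these $n-3$. The total available value is then at least $v_1(M\setminus P)+(n-3)\beta>n-1+(n-3)\beta$, with the $\beta$ term now entering with the favorable sign. We would first cap values at $1$, since any good of value at least $1$ forms its own bundle and this reduces $k$. We then take a minimal non-coverable structure and use the swapping argument with slack replaced by excess. This should show that the bundles are equinumerous, with any $r$ goods of the critical set worth less than $1$. Finally we would rerun Lemmas~\ref{lem:large-union}--\ref{lem:disjoint-overload}, comparing two rejected bundles $P,Q$ with agent~$1$'s MMS partition to get two MMS bundles whose union is worth less than $2$, a contradiction. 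We expect this step to require the most care, since duplication adds value rather than removing cost and the inequality directions must be checked at every step.
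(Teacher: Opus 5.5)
Your overall skeleton (normalize and cap values at $1$, take a Hall violator in agent $n$'s partition, match outside $S$, repartition a complement $M\setminus P_j$ for agent~$1$) matches the paper. However, there are genuine gaps at each of the three points where goods differ from chores.

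First, round-robin plus one copy per short agent does not work. A goods-PROP1 repair takes a good from \emph{someone else's} bundle, so different agents' demands can collide; chore disposals never collide. Take three identical agents who value a good $a$ at $1$ and $201$ further goods at $0.01$ each, so $v(R)=3.01>3$. Round-robin gives the first picker $a$ and $67$ small goods. The other two agents get $67$ small goods each, worth $0.67$. With at most two copies, each of these two agents can reach $1$ only through a copy of $a$, and $a$ may be copied only once. Neither of your rules rescues this: the ``second choice'' is a small good, and the third agent's predecessor's first pick is also a small good (in general it could even be worthless to her). The paper instead uses the bag-filling guarantee of Akrami et al.\ (Lemma~\ref{gdup:lem:bagfill}): $t$ agents with $v_i(R)>t$ and values at most $1$ can all be satisfied with at most $t-1$ copies of distinct originals. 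On the instance above, bag-filling needs no copy at all. This lemma is used both for $|S|\le n-2$ and in the Hall-violator branch of the reduction.

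Second, in that Hall-violator branch you cannot discard all tentative copies. The bundles $B_h$, $h\in K$, handed to $S\setminus T$ may need their copies to reach value $1$. A copy whose two instances both lie in matched bundles must be kept, and your coverability notion does not bound how many there are. For example, $n-3$ copies shared by two matched bundles, plus $|T|-1$ fresh ones, exceed $n-3$ once $n\ge5$. The paper therefore requires in Definition~\ref{gdup:def:coverable} that the duplication multigraph be a forest. Then at most $|K|-1$ copies survive, and $(|K|-1)+(|T|-1)=n-3$. The different-bundle condition you flag is not the real constraint.

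Third, the structural step is left as a hope, and the symmetric conclusion you aim for is too weak. ``Any $r$ critical goods are worth less than $1$'' says nothing about goods outside the critical set, so an MMS bundle may reach value $1$ with few critical goods, and no contradiction follows. Also, the minimum-size non-packable set of Lemma~\ref{lem:critical-core} has no direct counterpart, since non-coverability is inherited by subsets.

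The paper proves a full characterization (Lemma~\ref{gdup:lem:structure}). If $M'$ with $v(M')>k$ is not $(k,k-2)$-coverable, then there are $r\ge2$ and $M^*$ with $|M^*|=k(r-1)+1$ such that $v(Y)<1 \Longleftrightarrow |Y\cap M^*|\le r-1$ for \emph{every} $Y\subseteq M'$. The proof uses a three-cover exchange argument, a bag-filling reduction from $k$ covers to three, and forest-preserving constructions.

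The final count then runs in reverse to the chores case:
\begin{itemize}
\item The characterization forces $r_P=r_Q=:r$ and $\hat M_P\setminus Q=\hat M_Q\setminus P$, which gives the \emph{upper} bound $|\hat M|\le n(r-1)+1$.
\item One takes the two MMS bundles with the \emph{fewest} core goods.
\item Their union $H$ is split into $Y_P\subseteq M\setminus P$ and $Y_Q\subseteq M\setminus Q$, each with at most $r-1$ core goods, so $v_1(H)<2$.
\end{itemize}
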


\subsection{Preliminaries}

We use the notation of the chores setting, with nonnegative
additive valuations $v_i$ for goods.

\begin{definition}[MMS]
\label{gdup:def:mms}
Agent $i$'s maximin share is
$$\MMS_i:=\max_{\bX\in\prod_n(M)}\min_{j\in[n]}v_i(X_j).$$
A bundle is MMS-feasible to $i$ if its value is at least $\MMS_i$.
An MMS partition of $i$ is an $n$-partition of $M$ into such bundles.
\end{definition}

We may assume that, for every agent $i\in N$,
\begin{equation}
\label{gdup:eq:normalization}
\MMS_i=1,\qquad 0\le v_i(e)\le1\quad(e\in M).
\end{equation}
If all MMS values are zero, any complete allocation suffices.
Otherwise, an agent with positive MMS has $n$ nonempty bundles
in her MMS partition, so $|M|\ge n$. Give each zero-MMS agent
unit value for every good; her original guarantee is automatic.
Scale each valuation to $\MMS_i=1$, then replace each singleton
value $v_i(e)$ by $\min\{v_i(e),1\}$ and extend additively.
Every bundle worth at least $1$ remains so, and capping cannot
increase MMS; hence MMS remains $1$. An MMS partition therefore
gives $v_i(M)\ge n$.
Any allocation meeting these normalized guarantees, including
with copies, also meets the original guarantees. Henceforth,
$v_i$ denotes the normalized valuation.

\begin{definition}[MMS Allocation]
\label{gdup:def:mms-allocation}
A complete allocation $\bX$ of $M$ is MMS if $v_i(X_i)\ge1$ for
every $i\in N$.
\end{definition}

\begin{definition}[MMS Allocation with Duplications]
\label{gdup:def:duplication-allocation}
An MMS allocation with at most $\ell$ duplications assigns all
original goods and at most $\ell$ labeled copies, giving every agent
her original MMS guarantee. A copy has the same value as its original
to every agent. Each original is copied at most once, and its two
instances go to different agents.
\end{definition}

These duplication restrictions correspond to the distinct-copy case of
Akrami et al.~\cite[Definition~2.5]{akrami2026augmentation}.
Throughout, augmented partitions obey these restrictions.

\begin{definition}[MMS-feasibility Graph]
\label{gdup:def:feasibility-graph}
For an $n$-partition $\bP=(P_1,\ldots,P_n)$, the graph is defined
as in Definition~\ref{chores-MMS-feasibility graph}, with an edge $(i,P_j)$ whenever
$v_i(P_j)\ge1$. We again write $L(S)$ for the neighborhood of an
agent set $S$.
\end{definition}

\subsection{Allocation Framework}
\label{gdup:sec:reduction}

As in Section~\ref{four agents}, fix an MMS partition $\bP$ of agent $n$ and form
its feasibility graph. Every agent has a neighbor by
$v_i(M)\ge n$, and agent $n$ accepts every bundle. A perfect matching
gives a complete MMS allocation. Otherwise, choose a
maximum-cardinality Hall violator $S$, meaning $|L(S)|<|S|$.
Then $2\le|S|\le n-1$.

As shown in Lemma~\ref{lemma:pm_outside_S*}, we have the following observation for goods.
\begin{lemma}
\label{gdup:lem:outside}
There exists a matching saturating $N\setminus S$ using only
bundles in $\bP\setminus L(S)$.
\end{lemma}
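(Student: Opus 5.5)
This is the goods analogue of Lemma~\ref{lemma:pm_outside_S*}, and I would prove it by the same Hall-maximality argument, now applied to the goods feasibility graph of Definition~\ref{gdup:def:feasibility-graph}. The matching uses only the graph structure: the argument depends on $S$ being a maximum-cardinality Hall violator and on no valuation properties, so the change from $c_i(P_j)\le 1$ to $v_i(P_j)\ge1$ in the edge definition does not matter.

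First I restrict the graph to agent set $N\setminus S$ and bundle set $\bP\setminus L(S)$, and suppose no matching saturates $N\setminus S$. Hall's theorem then gives a nonempty $S'\subseteq N\setminus S$ whose neighborhood in the restricted graph has fewer than $|S'|$ vertices. That neighborhood is exactly $L(S')\setminus L(S)$, so $|L(S')\setminus L(S)|<|S'|$.

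Next I compute the neighborhood of the union. Since $L(S\cup S')=L(S)\cup\bigl(L(S')\setminus L(S)\bigr)$ is a disjoint union, and $S\cap S'=\varnothing$,
\[
|L(S\cup S')| = |L(S)|+|L(S')\setminus L(S)| < |S|+|S'| = |S\cup S'|.
\]
So $S\cup S'$ is a Hall violator strictly larger than $S$, contradicting the choice of $S$ as a maximum-cardinality violator. Hence a matching saturating $N\setminus S$ with bundles from $\bP\setminus L(S)$ exists.

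I do not expect a genuine obstacle. The only point to check is that the restricted neighborhood is exactly $L(S')\setminus L(S)$, which holds by construction. Duplications and normalization play no role here, because they enter only later, when the remaining goods are reallocated.
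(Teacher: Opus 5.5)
Your proposal is correct and follows essentially the same route as the paper: the paper proves this goods version simply by pointing back to Lemma~\ref{lemma:pm_outside_S*}, whose proof is exactly your Hall-maximality argument with $S\cup S'$. You are also right that the argument depends only on the graph structure, so switching the edge condition to $v_i(P_j)\ge1$ changes nothing.
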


Similar to Lemma~\ref{lem:prop}, our allocation steps use the following guarantee, while keeping copies distinct.
\begin{lemma}[{\cite[Lemma~3.7]{akrami2026augmentation}}]
\label{gdup:lem:bagfill}
Suppose $t\ge1$ agents divide a set of original goods $R$, with
$v_i(R)>t$ and $v_i(e)\le1$ for every participating agent $i$
and $e\in R$. There is a complete allocation giving each agent
value at least $1$, using at most $t-1$ copies of distinct originals
in $R$, with no agent receiving both instances of a good.
\end{lemma}

To obtain this formulation, apply the cited lemma to
$u_i=t\,v_i/v_i(R)$, for which $u_i(R)=t$, $u_i(e)<1$, and
$u_i\le v_i$; assign any unused originals without adding copies.

\begin{lemma}
\label{gdup:lem:small}
If $|S|\le n-2$, there exists an MMS allocation with at most
$n-3$ duplications.
\end{lemma}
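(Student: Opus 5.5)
We mirror the proof of Lemma~\ref{lem:non-extreme-hall}, replacing round-robin by Lemma~\ref{gdup:lem:bagfill}. First, by Lemma~\ref{gdup:lem:outside}, we assign to each agent in \(N\setminus S\) a distinct bundle of \(\bP\setminus L(S)\) that she values at least \(1\). These are \(n-|S|\) original bundles of agent \(n\)'s MMS partition, so this step uses no copies. Every agent \(i\in S\) is not adjacent to any of these bundles, so \(v_i(P_j)<1\) for each of them.

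Next, let \(R\) be the set of goods not yet assigned. For every \(i\in S\), additivity and the normalization \(v_i(M)\ge n\) give
\[
v_i(R)=v_i(M)-\sum_{j\ \text{matched}}v_i(P_j)>n-(n-|S|)=|S|.
\]
Also, \(v_i(e)\le1\) for every \(e\in R\) by \eqref{gdup:eq:normalization}. We then apply Lemma~\ref{gdup:lem:bagfill} with \(t=|S|\ge2\) agents and good set \(R\). This yields a complete allocation of \(R\) among \(S\) in which every agent receives value at least \(1\), using at most \(|S|-1\le n-3\) copies. All copied goods are distinct originals in \(R\), and no agent receives both instances of a good.

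Finally, we check that the combined allocation satisfies Definition~\ref{gdup:def:duplication-allocation}. Every original good is assigned: the matched bundles together with \(R\) cover \(M\). Copies arise only inside \(R\), and each is copied at most once. Both instances of a copied good go to agents in \(S\), and Lemma~\ref{gdup:lem:bagfill} ensures they go to different agents. Every agent meets her normalized guarantee, and hence her original one.

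There is no substantive obstacle. The only point needing care is that Lemma~\ref{gdup:lem:bagfill} requires the strict inequality \(v_i(R)>t\). This follows from the strict rejection \(v_i(P_j)<1\) for bundles outside \(L(S)\), so we never need the rescaling remark's edge cases.
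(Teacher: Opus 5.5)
Your proposal is correct and follows the paper's proof: you match $N\setminus S$ to bundles outside $L(S)$ via Lemma~\ref{gdup:lem:outside}, deduce $v_i(R)>|S|$ for $i\in S$ from $v_i(M)\ge n$ and $v_i(P_j)<1$, and apply Lemma~\ref{gdup:lem:bagfill} with $t=|S|$, which uses at most $|S|-1\le n-3$ copies of distinct originals. You also spell out the value bound and the duplication-restriction checks, which the paper leaves implicit.
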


\begin{proof}
As in Lemma~\ref{lem:non-extreme-hall}, match the agents outside $S$ to bundles rejected
by all agents in $S$. Here the remaining goods $R$ satisfy
$v_i(R)>|S|$ for every $i\in S$. By
Lemma~\ref{gdup:lem:bagfill}, these agents can be satisfied using
at most $|S|-1\le n-3$ copies of distinct originals in $R$.
\end{proof}

\begin{remark}
The same argument gives an MMS allocation with at most $n-2$
distinct duplications whenever $n\ge2$, recovering the bound of
Akrami et al.~\cite[Lemma~3.8]{akrami2026augmentation}.
\end{remark}

It remains to consider $S=N\setminus\{n\}$. Let
$J=\{j\in[n]:P_j\notin L(S)\}$, so $|J|\ge2$ and
$v_i(P_j)<1$ for all $i\in S$ and $j\in J$. Following the
reduction in Lemma~\ref{lem:near-packing-reduction}, we seek a
complement that becomes $n-1$ MMS-feasible bundles for one remaining
agent after adding at most $n-3$ distinct copies. We also require
the following forest condition.

\begin{definition}[$(k,\ell)$-Coverable]
\label{gdup:def:coverable}
For integers $k\ge1$ and $\ell\ge0$, a set of labeled goods is
\emph{$k$-coverable} to agent $i$ if it admits a partition into
$k$ bundles of value at least $1$ to $i$.
A set of originals $M'\subseteq M$ is \emph{$(k,\ell)$-coverable}
to $i$ if adding at most $\ell$ copies of distinct originals in $M'$
yields such a partition, with no bundle containing both instances
of a good, and such that the following \emph{duplication multigraph}
$D$ is a forest: its vertices are the bundles, and each duplicated
good contributes one edge joining the bundles containing its
original and its copy. Thus $D$ has no cycles or parallel edges.
We call $D$ the \emph{duplication forest} and the partition a
\emph{forest partition}.
\end{definition}

For a set of original goods, nonnegativity makes $k$-coverability
equivalent to the existence of $k$ pairwise disjoint bundles of
value at least $1$.
The forest condition prevents copy counts from accumulating in
the second matching step.

\begin{lemma}
\label{gdup:lem:reduction}
If $M\setminus P_j$ is $(n-1,n-3)$-coverable to some $i\in S$
for some $j\in J$, then there exists an MMS allocation with at
most $n-3$ duplications.
\end{lemma}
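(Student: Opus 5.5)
We mirror the proof of Lemma~\ref{lem:near-packing-reduction}, replacing round-robin by Lemma~\ref{gdup:lem:bagfill}. The forest condition is what keeps the total number of copies at $n-3$.

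\emph{Setup.} Without loss of generality let $i=1$. Fix a forest partition $B_1,\ldots,B_{n-1}$ of $M\setminus P_j$ together with a set $C$ of at most $n-3$ copies. Each $B_h$ has $v_1(B_h)\ge1$, and the duplication forest $D$ has vertex set $\{B_h\}$.

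Assign $P_j$ to agent~$n$, who accepts it. For every $i\in S$ we have $v_i(P_j)<1$ and $v_i(M)\ge n$, so $v_i(M\setminus P_j)>n-1$.

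Build the feasibility graph between $S$ and the bundles $B_h$. Agent~$1$ is adjacent to every bundle. Every other agent $i$ has a neighbor, because copies only add nonnegative value:
\[
\sum_{h}v_i(B_h)\ge v_i(M\setminus P_j)>n-1 .
\]

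\emph{Perfect matching case.} If the graph has a perfect matching, allocate the matched bundles. All originals are assigned and at most $|C|\le n-3$ copies are used. No agent receives both instances of a good, because no bundle contains both.

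\emph{Hall violator case.} Otherwise take a maximum-size Hall violator $T$. It excludes agent~$1$, so $|T|\le n-2$. By Lemma~\ref{gdup:lem:outside} applied to this graph, match $S\setminus T$ to bundles $B_h$ with $h\in K$, where $|K|=n-1-|T|\ge1$, and every agent of $T$ values each such bundle below $1$. Treat duplicated goods according to where their two instances lie.
\begin{itemize}
\item If both instances lie in matched bundles, keep the copy.
\item If exactly one instance lies in a matched bundle, regard that instance as the original and discard the other instance.
\item If both instances lie in unmatched bundles, keep a single original.
\end{itemize}
Let $R$ be the set of originals of $M\setminus P_j$ that have no instance in a matched bundle. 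For $i\in T$, each original with an instance in a matched bundle is counted at least once in $\sum_{h\in K}v_i(B_h)$. Hence
\[
v_i(R)\ge v_i(M\setminus P_j)-\sum_{h\in K}v_i(B_h)>(n-1)-(n-1-|T|)=|T| .
\]
Since $v_i(e)\le1$ after normalization, Lemma~\ref{gdup:lem:bagfill} allocates $R$ to $T$ with at most $|T|-1$ copies of distinct originals in $R$.

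\emph{Counting copies.} The copies kept inside matched bundles are edges of $D$ induced on the $|K|$ matched vertices. Since $D$ is a forest, there are at most $|K|-1=n-2-|T|$ of them. The total number of copies is therefore at most
\[
(n-2-|T|)+(|T|-1)=n-3 .
\]
The main obstacle, and the reason the forest condition is imposed, is this count. Without acyclicity, copies trapped among the matched bundles could number up to $|C|$, which would push the total past $n-3$.

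\emph{Validity of the duplication restrictions.}
\begin{itemize}
\item Each original is copied at most once. Copies kept in matched bundles are of goods outside $R$, while the new copies from Lemma~\ref{gdup:lem:bagfill} are of goods in $R$.
\item Two instances never reach the same agent. Matched bundles go to distinct agents and contain no duplicate pair. Lemma~\ref{gdup:lem:bagfill} guarantees the same within $T$.
\end{itemize}
Every agent receives value at least $1$, so the allocation is MMS with at most $n-3$ duplications.
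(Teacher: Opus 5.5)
Your proposal is correct and follows the paper's proof essentially step for step. You use the same feasibility graph on the augmented bundles, the same maximum Hall violator $T$ excluding agent~$1$, the same relabeling that makes the matched instance the original, the bound $v_i(R)>|T|$ via nonnegativity, and the forest count $|E(D[K])|\le |K|-1$, which gives $(|K|-1)+(|T|-1)=n-3$ copies in total.
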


\begin{proof}
Relabel the agents in $S$ so that the chosen agent is $1$.
Let $(B_1,\ldots,B_{n-1})$ be a forest partition of
$M\setminus P_j$ for agent $1$, with at most $n-3$ copies,
and let $D$ be its duplication forest. Allocate $P_j$ to agent
$n$, leaving $v_i(M\setminus P_j)>n-1$ for every $i\in S$.
As in Lemma~\ref{lem:near-packing-reduction}, form a new graph
on $S$ and the actual augmented bundles, with an edge $(i,B_h)$
whenever $v_i(B_h)\ge1$. Agent $1$ accepts every bundle, and
every other agent has a neighbor since
$\sum_h v_i(B_h)\ge v_i(M\setminus P_j)>n-1$.
A perfect matching therefore gives the required allocation.

Otherwise, let $T$ be a maximum-cardinality Hall violator.
It excludes agent $1$, so $t:=|T|\le n-2$.
By Lemma~\ref{gdup:lem:outside}, match $S\setminus T$ to
distinct bundles outside $L(T)$. Let $K$ be their indices and
write $a=|K|=n-1-t\ge1$.
For each duplicated good with one instance in the bundles indexed
by $K$ and the other outside these bundles, label the former as
the original and the latter as the copy. Allocate the bundles indexed by $K$ and cancel all copies
in the other bundles. The retained copies correspond to edges of
$D[K]$, so their number is $d=|E(D[K])|\le a-1$.

Let $U_K$ be the set of distinct originals represented in these
allocated bundles, and let $R=(M\setminus P_j)\setminus U_K$.
For every $i\in T$, nonadjacency gives
$v_i(U_K)\le\sum_{h\in K}v_i(B_h)<a$ and hence
$v_i(R)>n-1-a=t$.
By Lemma~\ref{gdup:lem:bagfill}, we can satisfy all agents in $T$
using at most $t-1$ fresh copies of distinct originals in $R$.
The total is at most $d+t-1\le(a-1)+(t-1)=n-3$ copies.
Every original is allocated, and all duplication restrictions hold.
\end{proof}

\subsection{Existence of \texorpdfstring{{$(n-1,n-3)$-Coverable}}{(n-1,n-3)-Coverable} Set}
\label{gdup:sec:proof}
We now show that some complement is \emph{$(n-1,n-3)$-coverable}
to agent~$1$. We prove a covering counterpart of
Lemma~\ref{lem:critical-core}, first for three covers and then
for a general number of covers while preserving a duplication
forest. For a nonnegative additive valuation $v$, a bundle $B$
is a \emph{cover} if $v(B)\ge1$; it is \emph{minimal} if no
proper subset is a cover.

\begin{claim}
\label{gdup:clm:path-cover}
If $v(M')>s$ for an integer $s\ge1$ and $v(e)\le1$ for every
$e\in M'$, then $M'$ is {$(s,s-1)$-coverable}.
\end{claim}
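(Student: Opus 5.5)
We will prove the claim by an explicit bag-filling construction along a fixed ordering of $M'$; the resulting duplication forest will be a path. List the goods of $M'$ in an arbitrary order $e_1,e_2,\ldots,e_m$ and view them as consecutive intervals of lengths $v(e_1),\ldots,v(e_m)$ laid end to end on the real line starting at $0$. The total length is $v(M')>s$. The idea is to cut this line at the integer points $1,2,\ldots,s-1$, and to place any good whose interval straddles a cut point into both adjacent bundles.

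\textbf{Construction.} For $h\in[s]$, let bundle $B_h$ consist of every good whose interval meets the segment $[h-1,h]$ in a set of positive length. The last bundle $B_s$ additionally receives all goods lying entirely beyond $s$.

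Then $v(B_h)\ge 1$ for every $h<s$, since the intervals of $B_h$ cover $[h-1,h]$. Likewise $v(B_s)\ge v(M')-(s-1)>1$, since $B_s$ covers $[s-1,v(M')]$.

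\textbf{Copies used.} A good is assigned to two or more bundles only if its interval contains an integer cut point $h\in\{1,\ldots,s-1\}$ in its interior. Because $v(e)\le1$, an interval of length at most $1$ contains at most one interior integer, so no good lands in three bundles. At most one good straddles each cut point. Hence at most $s-1$ goods are duplicated, each exactly once and each into two different bundles.

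\textbf{Forest condition.} For the good straddling cut point $h$, we declare the instance in $B_h$ to be the original and the instance in $B_{h+1}$ to be the copy. No bundle contains both instances of a good. The duplication multigraph only has edges of the form $\{B_h,B_{h+1}\}$, with at most one edge per $h$, so it is a subgraph of a path and in particular a forest.

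\textbf{Boundary cases.} A good whose interval ends exactly at a cut point is not duplicated, since it meets the next segment only in a single point. We treat this with the positive-length convention in the construction. Zero-valued goods are placed in any one bundle and cause no trouble.

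\textbf{Main obstacle.} The only real concern is verifying that $v(e)\le1$ prevents a good from spanning more than one cut point. This matters in the degenerate case $v(e)=1$, where the interval could start exactly at an integer. Even then it contains no interior integer, so only boundary contact occurs, and the positive-length convention handles it.
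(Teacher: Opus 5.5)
Your proof is correct and uses the same construction as the paper: lay the goods end to end as intervals, cut the line, and give each bundle every good that overlaps its segment in positive length, so each cut is straddled by at most one good and the duplication graph is a subgraph of a path. The only difference is where you cut. The paper cuts into $s$ equal parts of length $v(M')/s>1$, so every bundle is worth strictly more than $1$ and no good of value at most $1$ can cross two cuts. You cut at the integers $1,\ldots,s-1$ and put the surplus into $B_s$, which gives only $v(B_h)\ge 1$ for $h<s$ (enough for the definition of a cover) and requires the small interior-versus-endpoint check you carry out.
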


\begin{proof}
Represent the positive-valued goods by consecutive intervals
whose lengths equal their values. Divide the resulting interval
into $s$ equal parts, each of length $\frac{v(M')}{s}>1$, and
give each bundle one instance of every good whose interval
overlaps its part in positive length. Since each good has value
at most $1$, it crosses at most one cut, and each cut is crossed
by at most one good. Thus every bundle has value greater than
$1$, at most $s-1$ distinct goods are copied once each, and the
duplication graph is a subgraph of a path. Assign zero-valued
goods arbitrarily.
\end{proof}

\begin{lemma}
\label{gdup:lem:structure}
Let $M'$ be a finite set of original goods with nonnegative
additive values $v(e)\le1$ and $v(M')>k$, where $k\ge3$ is an
integer. If $M'$ is not {$(k,k-2)$-coverable}, then there
exist an integer $r\ge2$ and a set $M^*\subseteq M'$ of size
$k(r-1)+1$ such that, for every $Y\subseteq M'$,
$$
v(Y)<1
\quad\Longleftrightarrow\quad
|Y\cap M^*|\le r-1.
$$
Equivalently, the minimal covers are precisely the $r$-subsets
of $M^*$.
\end{lemma}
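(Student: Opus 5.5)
The plan is to mirror the proof of Lemma~\ref{lem:critical-core}: turn non-coverability into a rigid numerical condition, then use exchange arguments to force all minimal covers to be the $r$-subsets of one set. Set $\delta=v(M')-k>0$. The basic tool is a \emph{peeling} observation. Suppose some $Y\subseteq M'$ has $v(Y)\ge1$ and $v(M'\setminus Y)>k-1$. Then apply Claim~\ref{gdup:clm:path-cover} with $s=k-1$ to $M'\setminus Y$. This gives $k-1$ covers with at most $k-2$ copies, whose duplication graph is a subgraph of a path. Adding $Y$ as an isolated vertex keeps the duplication graph a forest, so $M'$ would be $(k,k-2)$-coverable. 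Hence non-coverability forces
\[
v(Y)\ge1\ \Longrightarrow\ v(Y)\ge 1+\delta \quad\text{for every } Y\subseteq M'.
\]
That is, every subset has value either below $1$ or at least $1+\delta$.

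I would also record a ``one-copy'' version of peeling. Take a cover $Y$ together with one good $g\in Y$ that is copied into the rest, and apply the claim with $s=k-1$ to $(M'\setminus Y)+g$. The duplication graph is then a path plus one pendant edge, still a forest, using at most $k-2$ copies. This will be used to exclude covers that are ``almost'' cheap.

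Next I would extract the structure of minimal covers. If $C$ is a minimal cover and $e\in C$, then $v(C-e)<1$ while $v(C)\ge1+\delta$. So every good lying in some minimal cover has value greater than $\delta$. Call a good \emph{large} if it lies in some minimal cover, and let $M^*$ be the set of large goods. The goal is to show two things. First, all minimal covers have the same size $r$. Second, any $r$ large goods already form a cover.

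For this I would imitate the swap argument of Lemma~\ref{lem:critical-core}. Fix a cheapest minimal cover and a most valuable large good. Swapping one element of a cover for any other large good changes its value by a bounded amount. I would use the dichotomy above, together with the one-copy peeling, to show that the result cannot fall below $1$. Applying such exchanges repeatedly shows that every $r$-subset of $M^*$ is a cover. It also shows that every $(r-1)$-subset, padded with all small goods, stays below $1$. This is where one checks $r\ge2$: if $r=1$, the $k$ singletons of largest value would give a cover with no copies, or with few copies, contradicting the assumption. Once this holds, the equivalence $v(Y)<1\iff|Y\cap M^*|\le r-1$ follows by monotonicity.

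The last step is counting, to get $|M^*|=k(r-1)+1$. If $|M^*|\ge kr$, then $k$ disjoint $r$-subsets of $M^*$ give $k$ covers with no copies, a contradiction. If $|M^*|\le kr-1$, I would place $r-1$ large goods in each of $k$ bundles, distribute the small goods, and spend copies on the remaining large goods along a path. The remaining large goods number at most $k-1$, and the bound $v(M')>k$ should let one check that at most $k-2$ copies suffice unless $|M^*|=k(r-1)+1$ exactly.

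I expect this counting step, together with the exchange step that establishes ``any $r$ large goods cover, and small goods never help,'' to be the main obstacle. The copy budget must be tracked so that the forest condition survives, and the argument must rule out the case $|M^*|$ strictly between $k(r-1)+1$ and $kr-1$. I would first attempt it by choosing a counterexample $M'$ minimal under deletion of goods and then arguing as in the minimality argument of Lemma~\ref{lem:critical-core}.
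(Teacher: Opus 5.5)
\textbf{There is a genuine gap in the central step.} Your zero-copy peeling is correct and gives the dichotomy $v(Y)\ge1\Rightarrow v(Y)\ge1+\delta$. It yields $r\ge2$ at once, since a single good of value $1$ would be a cover worth less than $1+\delta$. Your counting step is essentially the paper's once the structure is known: $v(M')>k$ rules out $|M^*|\le k(r-1)$, and a path of $d=kr-|M^*|$ copies handles $k(r-1)+2\le|M^*|<kr$.

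The gap is the step that all minimal covers have one common size $r$ and that every $r$-subset of their union is a cover. Your only copy-using tool, the one-copy peeling, is over budget. Claim~\ref{gdup:clm:path-cover} with $s=k-1$ may already spend $k-2$ copies on the path, so the pendant edge for $g$ brings the total to $k-1$. The claim may also cut through $\operatorname{copy}(g)$, which would copy $g$ twice.

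The dichotomy alone cannot drive the exchange. Take $k=3$ with values $0.8,0.8,0.4,0.4,0.4,0.4$. Then $v(M')=3.2$ and $\delta=0.2$. Every subset value is a multiple of $0.4$, so no subset lies in $[1,1.2)$. Yet $\{0.8,0.4\}$ and $\{0.4,0.4,0.4\}$ are minimal covers of different sizes. This set is $(3,1)$-coverable, via $\{a_1,b_1\},\{a_2,b_2\},\{b_3,b_4,\operatorname{copy}(b_1)\}$. So any correct proof must exploit copy configurations that stay within budget, and ``swap and use the dichotomy'' does not say how.

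\textbf{What the paper supplies.} First, a valid one-copy inequality in a three-cover setting. If $v(W)>3$ and $W$ is not $(3,1)$-coverable, take two disjoint minimal covers $B_1,B_2$ with leftover $L$. Then $v(x)<1-v(L)$, since otherwise $L+\operatorname{copy}(x)$ would be a third cover. Minimality gives $v(x)>v(B_h)-1$. Together these make every swap preserve both coverage and minimality, which forces equal sizes.

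Second, a reduction to three covers within the $k-2$ budget. For intersecting minimal covers $B,C$ one has $v(B\cup C)<3$. Bag-filling outside $B\cup C$ then produces $k-3$ covers with $k-3$ copies forming a path. The residual $R\supseteq B\cup C$ has $v(R)>3$. The $f$-for-$g$ substitution shows that if $R$ were $(3,1)$-coverable, $M'$ would be $(k,k-2)$-coverable.

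Third, two disjoint minimal covers are linked by a minimal cover inside $C-y+x$, where $x$ is a most valuable good of $B\cup C$. Finally, $M^*$ is taken as a maximal set all of whose $r$-subsets are minimal covers, and one shows every minimal cover lies in it.

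None of these ingredients appears in your sketch. Your fallback of a deletion-minimal counterexample does not obviously work either. Deleting goods can destroy $v(M')>k$, and the conclusion must hold for every $Y\subseteq M'$.
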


\begin{proof}
If some original has value $1$, allocate it alone and apply
Claim~\ref{gdup:clm:path-cover} to the remaining goods, whose
total value exceeds $k-1$. This gives a forest partition with at
most $k-2$ copies, a contradiction. Hence every original has
value below $1$.

We first establish an exchange property for three covers.
Let $W\subseteq M'$ satisfy $v(W)>3$, and suppose that $W$
is not \emph{$(3,1)$-coverable}. We show that any two minimal
covers in $W$ have the same cardinality, and every subset of
their union of that cardinality is a minimal cover. 

First take two disjoint minimal covers $B_1,B_2\subseteq W$,
and let $L=W\setminus(B_1\cup B_2)$. We have $v(L)<1$;
otherwise, these sets would give three disjoint covers.
Every $x\in B_1\cup B_2$ satisfies $v(x)<1-v(L)$, since
otherwise $L+\operatorname{copy}(x)$ would be a third cover.
Moreover, minimality gives $v(x)>v(B_h)-1$ for every $x\in B_h$, $h=1,2$.

Swap any $x\in B_1$ and $y\in B_2$. Both resulting bundles
remain covers, since
$$
\begin{aligned}
v(B_1-x+y)
&>v(B_1)+v(B_2)+v(L)-2\\
&=v(W)-2>1,
\end{aligned}
$$
and similarly for $B_2-y+x$. They also remain minimal.
Otherwise, some good $z$ could be removed from one of the new
covers while leaving a cover. Choose a good $w$ from the
pre-swap cover that did not contain $z$. Then $z\ne w$, and
minimality of the pre-swap covers gives
$$
\begin{aligned}
v\bigl(L+z+\operatorname{copy}(w)\bigr)
&=v(L)+v(z)+v(w)\\
&>v(L)+v(B_1)+v(B_2)-2\\
&=v(W)-2>1.
\end{aligned}
$$
After removing $z$, the original $w$ remains in one of the
new covers. Thus $L+z+\operatorname{copy}(w)$ is a legal third
cover, a contradiction.

Swaps leave $B_1\cup B_2$ and $L$ unchanged and preserve
minimality, so they can be repeated. The two covers must have
the same cardinality. Otherwise, use swaps to place the
least-valued goods in the smaller cover and the remaining
goods in the larger one. Removing any good from the larger
cover would then leave a bundle worth at least as much as
the smaller cover, contradicting minimality. Every subset
of their union with this common cardinality can be obtained
by swaps and is therefore a minimal cover.

Now take arbitrary minimal covers $B,C\subseteq W$.
Since $v(B)<2$ and $v(W)>3$, there is a minimal cover
$D\subseteq W\setminus B$. The disjoint-cover argument gives
$|B|=|D|=:r$, where $r\ge2$, and shows that every $r$-subset
of $U:=B\cup D$ is a minimal cover. If $C\subseteq U$,
minimality forces $|C|=r$, and every $r$-subset of $B\cup C$
is already minimal. Otherwise, $|C\cap U|<r$, since an
$r$-subset of this intersection would be a proper covering
subset of $C$. Hence $|U\setminus C|\ge r+1$.
For any $r$-subset $Y\subseteq B\cup C$, choose an $r$-subset
$E\subseteq U\setminus C$ containing $Y\setminus C$.
Applying the disjoint-cover argument to $C,E$ gives $|C|=r$
and shows that $Y$ is a minimal cover. This proves the
three-cover exchange property.

We next show that any two intersecting minimal covers in $M'$
have the same cardinality, and every subset of their union
of that cardinality is a minimal cover. For $k=3$, this follows
from the exchange property with $W=M'$. Assume $k\ge4$, take
intersecting minimal covers $B,C$, and choose $z\in B\cap C$.
Minimality gives $v(B),v(C)<1+v(z)$, so
$$
v(B\cup C)
\le v(B)+v(C)-v(z)
<2+v(z)<3.
$$
Hence the goods outside $B\cup C$ have total value greater
than $k-3$.

Put $t=k-3$. Apply the bag-filling construction of
Akrami et al.~\cite[Algorithm~2 and proof of Lemma~3.7]{akrami2026augmentation}
to the goods outside $B\cup C$, starting with a maximum-valued
original $f$ and stopping after $t$ allocations, with an instance
of the last duplicated good retained for continuation.
Label the instance retained in the remaining set $R$ as the original $g$. The construction gives $t$ covers using $t$ copies
of distinct originals and leaves a set $R$ of originals
containing $B\cup C$. By its value-loss bound,
$
v(R)>v(M')-t>3.
$
The only previously copied original in $R$ is $g$, while $f$
lies in the first allocated bundle, has not been copied, and
satisfies $v(f)\ge v(g)$. The first $t$ bundles form a path
in the duplication graph, and the edge for $g$ will join its
last vertex to the bundle receiving the original $g$.

Suppose that $R$ is \emph{$(3,1)$-coverable}, then it has a forest partition with three bundles and at
most one copy. If it uses no copy or copies an original other
than $g$, combine the allocations directly. Since $g$ is the
only previously copied original in $R$, no original is copied
more than once. The path on the first $t$ bundles attaches to
one residual bundle through $g$, and there is at most one edge
between residual bundles. The resulting graph is a forest
with at most $t+1=k-2$ edges.

If the residual partition copies $g$, replace that copy by a
fresh copy of $f$. This does not decrease its recipient's value.
The original $f$ has not been copied and lies in the first
allocated bundle, outside $R$. The residual bundles containing
the original $g$ and its former copy are distinct. The former
is joined to the last of the first $t$ bundles by the edge for
$g$, and the latter to the first by the new edge for $f$.
The resulting graph is a path with the third residual bundle
isolated, using exactly $t+1=k-2$ copies of distinct originals.
Either case contradicts the hypothesis that $M'$ is not \emph{$(k,k-2)$-coverable}.

Thus $R$ is not \emph{$(3,1)$-coverable}. Applying the
three-cover exchange property to $W=R$ shows that $B,C$
have the same cardinality and every subset of their union
of that cardinality is minimal. Since minimality depends
only on the bundle, this property also holds in $M'$.

For any two disjoint minimal covers $B,C\subseteq M'$, there
is a minimal cover intersecting both. Indeed, choose a
maximum-valued good $x\in B\cup C$, relabeling so that $x\in B$.
For any $y\in C$, the set $C-y+x$ is a cover. A minimal cover
$D\subseteq C-y+x$ must contain $x$, since $C-y$ is not a
cover, and at least one good from $C$, since $v(x)<1$.
Hence $D$ intersects both $B$ and $C$. The exchange property
for intersecting covers therefore implies that all minimal
covers have the same size $r\ge2$.

Starting with any minimal cover, add goods from $M'$ while
preserving the property that every $r$-subset is a minimal
cover, until no further good can be added. Denote the resulting
set by $M^*$; then $|M^*|\ge r$. Suppose a minimal
cover $B$ intersects $M^*$ but is not contained in it.
For any $r$-subset $Y\subseteq M^*\cup B$, if
$Y\subseteq M^*$, then $Y$ is already minimal. Otherwise,
$|Y\setminus B|\le r-1$. Choose an $r$-subset
$C\subseteq M^*$ containing $Y\setminus B$ and an element
of $B\cap M^*$. Then $B,C$ intersect and $Y\subseteq B\cup C$,
so the exchange property makes $Y$ minimal. This contradicts
the maximality of $M^*$.

Every minimal cover is contained in $M^*$. Indeed, fix a
minimal cover $C\subseteq M^*$ and take any minimal cover $B$.
If $B$ intersects $C$, the preceding argument gives
$B\subseteq M^*$. Otherwise, there is a minimal cover $D$
intersecting both $B$ and $C$. Applying the same argument
first to $D$ and then to $B$ again gives $B\subseteq M^*$.
Thus the minimal covers are precisely the $r$-subsets of
$M^*$, which gives the stated equivalence.

If $|M^*|\le k(r-1)$, partition all originals into $k$ sets,
each containing at most $r-1$ goods from $M^*$. Every set
has value below $1$, contradicting $v(M')>k$. Hence
$|M^*|\ge k(r-1)+1$.

Suppose that $|M^*|\ge k(r-1)+2$. If $|M^*|\ge kr$, form
$k$ disjoint $r$-subsets of $M^*$ and allocate all other
originals arbitrarily, obtaining $k$ covers without copies.
Otherwise, put $d=kr-|M^*|$, so $1\le d\le k-2$.
We now construct $k$ bundles, each containing $r$ distinct goods
from $M^*$, using $d$ copies and ensuring that the duplication
graph is a forest. Take a path with $d$ edges on $d+1$ of the $k$ bundle vertices,
leaving the others isolated. Assign a distinct original from
$M^*$ to each edge and place its two instances in the endpoint
bundles. Each bundle receives at most $2\le r$ goods this way.
Fill all $k$ bundles to $r$ goods each using distinct,
previously unused originals from $M^*$. The $d$ duplicated goods occupy $2d$ of the $kr$ places.
The remaining $kr-2d$ places are filled with distinct unused
originals, so the total number of distinct originals used is
$$
d+(kr-2d)=kr-d=|M^*|.
$$ Each bundle contains $r$ distinct goods
from $M^*$ and is therefore a cover. Exactly $d\le k-2$
originals are copied once each, and the duplication graph
is the chosen forest. Assign all remaining originals
arbitrarily; none has been copied. In either case, $M'$
is \emph{$(k,k-2)-$coverable}, a contradiction. Therefore,
$|M^*|=k(r-1)+1$.

\end{proof}

\paragraph{Proof of Theorem~\ref{gdup:thm:main}.}
Suppose for a
contradiction that every complement $M\setminus P_j$,
$j\in J$, fails to be \emph{$(n-1,n-3)$-coverable} to agent~$1$.
Choose distinct $P,Q\in\{P_j:j\in J\}$. For each $Z\in\{P,Q\}$,
we have $v_1(M\setminus Z)\ge n-v_1(Z)>n-1$.
Lemma~\ref{gdup:lem:structure}, with $k=n-1$, gives a set
$\hat M_Z\subseteq M\setminus Z$ and an integer $r_Z\ge2$
such that $|\hat M_Z|=(n-1)(r_Z-1)+1$ and
\begin{equation}
\label{gdup:eq:core-characterization}
v_1(Y)<1
\quad\Longleftrightarrow\quad
|Y\cap\hat M_Z|\le r_Z-1
\qquad(Y\subseteq M\setminus Z).
\end{equation}
Let $\hat M=\hat M_P\cup\hat M_Q$, and fix an MMS partition
$(A_1,\ldots,A_n)$ of agent~$1$. Since $\MMS_1=1$, we have
$v_1(A_h)\ge1$ for every $h\in[n]$.

As in Lemmas~\ref{lem:large-union}--\ref{lem:disjoint-overload},
we bound $|\hat M|$ and select two MMS bundles containing
few goods from $\hat M$ to derive a contradiction.

\begin{lemma}
\label{gdup:lem:small-union}
We have $r_P=r_Q=:r$, $\hat M_P\setminus Q=\hat M_Q\setminus P$,
and $|\hat M|\le n(r-1)+1$.
\end{lemma}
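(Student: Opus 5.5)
The plan is to compare the two characterizations \eqref{gdup:eq:core-characterization} on the common ground $M\setminus(P\cup Q)$, where both of them apply. We also use that $P$ and $Q$ are disjoint and that $v_1(P),v_1(Q)<1$. Define
\[
X_P=\hat M_P\setminus Q=\hat M_P\cap\bigl(M\setminus(P\cup Q)\bigr),
\qquad
X_Q=\hat M_Q\setminus P .
\]
The equalities hold because $\hat M_Z\subseteq M\setminus Z$.

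\emph{Step 1: bound how much of each core lies in the other rejected bundle.} The set $Q$ is contained in $M\setminus P$ and satisfies $v_1(Q)<1$. Applying \eqref{gdup:eq:core-characterization} with $Z=P$ and $Y=Q$ gives $|\hat M_P\cap Q|\le r_P-1$. Symmetrically, $|\hat M_Q\cap P|\le r_Q-1$. Consequently,
\[
|X_P|\ge (n-1)(r_P-1)+1-(r_P-1)=(n-2)(r_P-1)+1\ge r_P,
\]
where the last inequality uses $n\ge4$ and $r_P\ge2$. The same bound holds for $X_Q$ with $r_Q$ in place of $r_P$.

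\emph{Step 2: the two cores agree off $P\cup Q$.} Take any $r_P$-subset $Y\subseteq X_P$. It lies in $M\setminus(P\cup Q)$. By the characterization for $P$, it satisfies $v_1(Y)\ge1$. The characterization for $Q$ applies to $Y$, since $Y\subseteq M\setminus Q$, and therefore gives $|Y\cap\hat M_Q|\ge r_Q$. Because $Y$ avoids $P$, this means $|Y\cap X_Q|\ge r_Q$. Since $|Y|=r_P$, we get $r_Q\le r_P$, and the symmetric argument gives equality; call the common value $r$. With $r_P=r_Q=r$, the inequality $|Y\cap X_Q|\ge r=|Y|$ forces $Y\subseteq X_Q$. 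Every element of $X_P$ lies in some $r$-subset of $X_P$, because $|X_P|\ge r$. Hence $X_P\subseteq X_Q$, and by symmetry $X_P=X_Q$. This is exactly the claim $\hat M_P\setminus Q=\hat M_Q\setminus P$.

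\emph{Step 3: count.} Write $X=X_P=X_Q$. Then $\hat M$ is the disjoint union of $X$, $\hat M_P\cap Q$, and $\hat M_Q\cap P$. Since $X\cup(\hat M_P\cap Q)=\hat M_P$, we obtain
\[
|\hat M|=|\hat M_P|+|\hat M_Q\cap P|\le (n-1)(r-1)+1+(r-1)=n(r-1)+1,
\]
using Step 1 for the last term.

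The only delicate point is Step 2. There, one must make sure the characterization for $Q$ is legitimately applied to subsets of $\hat M_P$. This is why we first restrict to $X_P$, which avoids $Q$. It is also why Step 1 must guarantee $|X_P|\ge r_P$, so that such subsets exist and cover all of $X_P$; that is where $n\ge4$ enters.
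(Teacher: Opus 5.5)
Your proof is correct and follows essentially the paper's route. Both arguments compare the two characterizations on $M\setminus(P\cup Q)$: the $r$-subsets of $\hat M_P\setminus Q$ are covers, so they must lie in $\hat M_Q$, and the count then uses $|\hat M_Q\cap P|\le r-1$. The one small difference is that you ensure $|\hat M_P\setminus Q|\ge r_P$ from the cardinality bound given by $v_1(Q)<1$, whereas the paper obtains a minimal cover in $M\setminus(P\cup Q)$ from $v_1(M\setminus(P\cup Q))>n-2\ge 2$.
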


\begin{proof}
Let $X=M\setminus(P\cup Q)$. Since
$v_1(X)=v_1(M)-v_1(P)-v_1(Q)>n-2\ge2$, it contains a
minimal cover. By \eqref{gdup:eq:core-characterization},
the minimal covers in $X$ are precisely the $r_P$-subsets
of $X\cap\hat M_P$, and also the $r_Q$-subsets of
$X\cap\hat M_Q$. A minimal cover in $X$ therefore gives
$r_P=r_Q=:r$. Both $X\cap\hat M_P$ and $X\cap\hat M_Q$ consist exactly of
the goods appearing in minimal covers in $X$. Thus
$$
X\cap\hat M_P
=
\bigcup_{\substack{B\subseteq X\\ B\text{ is a minimal cover}}} B
=
X\cap\hat M_Q.
$$ Since
$\hat M_P\subseteq M\setminus P$ and
$\hat M_Q\subseteq M\setminus Q$, it follows that
\begin{equation}
\label{gdup:eq:core-decomposition}
\hat M_P\setminus Q=\hat M_Q\setminus P.
\end{equation}
Also, $v_1(P)<1$ implies $|P\cap\hat M_Q|\le r-1$.
Therefore,
$$
\begin{aligned}
|\hat M|
&=|\hat M_P|+|P\cap\hat M_Q|\\
&\le(n-1)(r-1)+1+(r-1)
=n(r-1)+1.
\end{aligned}
$$
\end{proof}

\begin{lemma}
\label{gdup:lem:sparse-mms-bundles}
There exist two bundles among $A_1,\ldots,A_n$ whose union
$H$ satisfies $|\hat M\cap H|\le2(r-1)$.
\end{lemma}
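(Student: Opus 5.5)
This is a pigeonhole statement, so I would argue by averaging, as in Lemma~\ref{lem:dense-mms-bundles} but in the opposite direction. Instead of the two bundles richest in $\hat M$, choose the two bundles among $A_1,\ldots,A_n$ containing the \emph{fewest} goods from $\hat M$, and let $H$ be their union. Since $(A_1,\ldots,A_n)$ partitions $M$, the counts $|\hat M\cap A_h|$ sum to $|\hat M|$. The average of the two smallest counts is at most the average over all $n$ bundles, so
\[
|\hat M\cap H|\le \frac{2|\hat M|}{n}.
\]

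Next I would substitute the bound from Lemma~\ref{gdup:lem:small-union}, namely $|\hat M|\le n(r-1)+1$. This gives
\[
|\hat M\cap H|\le 2(r-1)+\frac{2}{n}.
\]
Since $n\ge4$, we have $2/n<1$. The left side is an integer, so taking the floor yields $|\hat M\cap H|\le 2(r-1)$, as required.

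The only point needing care is the integrality rounding. It uses $n\ge 3$, which the standing assumption $n\ge4$ covers. No further obstacle arises, since Lemma~\ref{gdup:lem:small-union} already supplies the exact count. The intended use afterwards is that $H$ then cannot contain two disjoint $r$-subsets of $\hat M_Z$. Each $A_h$ in $H$ has value at least $1$, so by \eqref{gdup:eq:core-characterization} each would need at least $r$ goods of $\hat M$, which gives the contradiction.
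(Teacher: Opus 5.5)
Your proof is correct and is exactly the paper's argument: take the two bundles with the fewest goods from $\hat M$, average to get $|\hat M\cap H|\le 2|\hat M|/n\le 2(r-1)+2/n$, and round down using integrality and $n\ge 4$. One caution on your closing remark, which is not needed for this lemma: \eqref{gdup:eq:core-characterization} applies only to sets contained in $M\setminus P$ or in $M\setminus Q$, so a single $A_h$ meeting both $P$ and $Q$ is not directly forced to contain $r$ goods of $\hat M$; this is why the paper's next lemma instead splits $H$ into $Y_P\subseteq M\setminus P$ and $Y_Q\subseteq M\setminus Q$.
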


\begin{proof}
Choose the two bundles containing the fewest goods from
$\hat M$, and let $H$ be their union. Averaging gives
$$
|\hat M\cap H|
\le\left\lfloor\frac{2|\hat M|}{n}\right\rfloor
\le\left\lfloor2(r-1)+\frac{2}{n}\right\rfloor
=2(r-1).
$$
\end{proof}

Fix such a union $H$.

\begin{lemma}
\label{gdup:lem:deficient-parts}
The union $H$ admits a partition into
$Y_P\subseteq M\setminus P$ and $Y_Q\subseteq M\setminus Q$
such that $|Y_P\cap\hat M_P|\le r-1$ and
$|Y_Q\cap\hat M_Q|\le r-1$. Consequently, $v_1(H)<2$.
\end{lemma}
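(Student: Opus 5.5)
The plan is to build the partition of $H$ directly, guided by the decomposition of $\hat M$ from Lemma~\ref{gdup:lem:small-union}, and then read off the value bound from the characterization \eqref{gdup:eq:core-characterization}. Write $X=M\setminus(P\cup Q)$ and $C=X\cap\hat M_P=X\cap\hat M_Q$; this common set exists by \eqref{gdup:eq:core-decomposition}. Since $\hat M_P\subseteq M\setminus P$ and $\hat M_Q\subseteq M\setminus Q$, every good of $\hat M\cap H$ lies in exactly one of three classes:
\[
H\cap Q\cap\hat M_P,\qquad H\cap P\cap\hat M_Q,\qquad H\cap C.
\]
Let their sizes be $a$, $b$, $c$. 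Then $a+b+c=|\hat M\cap H|\le 2(r-1)$ by Lemma~\ref{gdup:lem:sparse-mms-bundles}.

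\textbf{Individual bounds on $a$ and $b$.} We have $Q\subseteq M\setminus P$ and $v_1(Q)<1$, so \eqref{gdup:eq:core-characterization} with $Z=P$ gives $|Q\cap\hat M_P|\le r-1$, and hence $a\le r-1$. Symmetrically, $b\le r-1$.

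\textbf{The partition.} We put every good of $H\cap Q$ into $Y_P$ and every good of $H\cap P$ into $Y_Q$. This respects $Y_P\subseteq M\setminus P$ and $Y_Q\subseteq M\setminus Q$, because $P$ and $Q$ are disjoint. Goods of $H\cap X$ may go to either side. Among them, the goods outside $C$ are irrelevant to both counts, so we place them arbitrarily. We split $H\cap C$ by giving $\min\{c,\,r-1-a\}$ of its goods to $Y_P$ and the rest to $Y_Q$.

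\textbf{Verifying the counts.} The relevant intersections are
\[
Y_P\cap\hat M_P=(H\cap Q\cap\hat M_P)\cup(Y_P\cap C),\qquad
Y_Q\cap\hat M_Q=(H\cap P\cap\hat M_Q)\cup(Y_Q\cap C).
\]
This uses $Q\cap\hat M_Q=\varnothing$ and $P\cap\hat M_P=\varnothing$. The first set has size at most $a+(r-1-a)=r-1$. The second has size $b+\max\{0,\,c-(r-1-a)\}$. This is at most $r-1$ because $b\le r-1$ and $a+b+c\le 2(r-1)$.

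Applying \eqref{gdup:eq:core-characterization} with $Z=P$ to $Y_P$ and with $Z=Q$ to $Y_Q$ gives $v_1(Y_P)<1$ and $v_1(Y_Q)<1$. Hence $v_1(H)<2$. This contradicts $v_1(H)\ge2$, since $H$ is the union of two bundles of agent~$1$'s MMS partition.

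The only delicate step is the bookkeeping of which core goods may go to which side. Goods in $P$ are forced into $Y_Q$ and goods in $Q$ into $Y_P$. The individual bounds $a,b\le r-1$, which come from $v_1(P),v_1(Q)<1$, are exactly what make the greedy split of $C$ succeed.
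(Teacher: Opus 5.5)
Your proposal is correct and follows essentially the same route as the paper. Like the paper, you seed $Y_P$ with $H\cap Q$ and $Y_Q$ with $H\cap P$, bound these via $v_1(P),v_1(Q)<1$, and then split the shared core goods $H\cap C$ using $|\hat M\cap H|\le 2(r-1)$; your explicit split $\min\{c,\,r-1-a\}$ is just a closed-form version of the paper's one-at-a-time distribution.
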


\begin{proof}
Initially set $Y_P=H\cap Q$ and $Y_Q=H\cap P$.
Since $P\cap Q=\varnothing$, these sets are disjoint and
satisfy $Y_P\subseteq M\setminus P$ and
$Y_Q\subseteq M\setminus Q$. Since $Y_P\subseteq Q$ and $Y_Q\subseteq P$,
\eqref{gdup:eq:core-characterization} and
$v_1(P),v_1(Q)<1$ give
$$
|Y_P\cap\hat M_P|\le r-1,
\qquad
|Y_Q\cap\hat M_Q|\le r-1.
$$
By \eqref{gdup:eq:core-decomposition}, every remaining good
of $H\cap\hat M$ belongs to both $\hat M_P$ and $\hat M_Q$.
Distribute the still-unassigned goods of $H\cap\hat M$ one at a time between $Y_P$ and $Y_Q$,
keeping $|Y_P\cap\hat M_P|\le r-1$ and
$|Y_Q\cap\hat M_Q|\le r-1$. This is always possible:
if both intersections already had size $r-1$ while a good
remained unassigned, then $|H\cap\hat M|>2(r-1)$,
a contradiction.
Assign all still unassigned goods of $H$ arbitrarily between
$Y_P$ and $Y_Q$. These goods lie outside $P\cup Q$ and
$\hat M$, so $Y_P\subseteq M\setminus P$ and
$Y_Q\subseteq M\setminus Q$ still hold, and the two bounds
are unchanged. Thus \eqref{gdup:eq:core-characterization} gives
$$
v_1(H)=v_1(Y_P)+v_1(Y_Q)<2.
$$
\end{proof}

But $H$ consists of two MMS bundles, so $v_1(H)\ge2$, a contradiction.
Thus at least one of $M\setminus P$ and $M\setminus Q$
is \emph{$(n-1,n-3)$-coverable} to agent~$1$.
Applying Lemma~\ref{gdup:lem:reduction} gives the desired
allocation, which also satisfies every agent's original
MMS guarantee by the normalization argument. Hence, we finish the proof of Theorem~\ref{gdup:thm:main}.
\end{document}